\documentclass[journal,draftcls,onecolumn,12pt,twoside]{IEEEtranTCOM}

%\\\\ PACKAGES

% Language
\usepackage[english]{babel}
\usepackage{booktabs}

% Utilities
\usepackage{ifpdf}

% Citation and Linking
\usepackage{cite} % Orders citations.
\usepackage{url}
\usepackage{hyperref}
\usepackage{setspace}
% Graphics
\ifCLASSINFOpdf
	\usepackage[pdftex]{graphicx}
	\graphicspath{{./figures/}}
	%\usepackage{subfigure} 
 	%\DeclareGraphicsExtensions{.pdf,.jpeg,.png}
\else
	\usepackage[dvips]{graphicx}
	\graphicspath{./figures/}
	%\DeclareGraphicsExtensions{.eps}
\fi
\usepackage{color}

\usepackage{pgf, tikz, pgfplots}
\usetikzlibrary{shapes, arrows, automata}
\usetikzlibrary{calc,hobby,decorations}
%\usepackage[caption=false,font=footnotesize]{subfig}
%\usepackage{fixltx2e}
%\usepackage{stfloats}
%\usepackage{dblfloatfix}

% Math
\usepackage[cmex10]{amsmath}
\usepackage{amsfonts, amssymb, amsthm}
\usepackage{mathrsfs}

%\usepackage{upgreek}
% \usepackage{theorem} % OBS: "Enhance�ments to LATEX's theorem environments, giving more choice in theorem layout. This package is no longer recommended by its author; he suggests users should use the AMS LATEX amsthm package instead; another widely-used alternative is ntheoremq." https://www.ctan.org/pkg/theorem?lang=en 2015-08-04

% Lists
%\usepackage{algorithm,algorithmic}
%\usepackage[]{algorithm2e}
%\usepackage{algorithm,algpseudocode}
%	\algnewcommand{\LeftComment}[1]{\Statex \(\triangleright\) #1}
\usepackage{algorithm} %format of the algorithm
\usepackage{algorithmic}
% Alignment
%\usepackage{array}
\usepackage{enumerate}
\usepackage{multirow}
\usepackage{rotating}
\usepackage{subcaption}
	\captionsetup[sub]{font=footnotesize}
	\captionsetup[figure]{font=small,labelsep=period,subrefformat=parens}

% correct bad hyphenation here
\hyphenation{op-tical net-works semi-conduc-tor}

\setlength{\baselineskip}{15pt}

\usepackage{needspace}

% \nbsubsubsection{} provides a numbered subsection in bold without a line break. The section will contain at least three lines of text before a pagebreak

% \myparagraph provides a paragraph title in italics. 

% \myindenetedparagraph provides an indented paragraph with title in italics.

% \myparagraphtc provides a paragraph title in italics. It adds an enter to the table of contents

\input{mySymbol.sty}

% Definitions
\definecolor{penndarkestblue}{cmyk}{1,0.74,0,0.77}
	% RGB = (0,15,58); #000f3a
\definecolor{penndarkerblue}{cmyk}{1,0.74,0,0.70}
	% RGB = (0,20,77); #00144d
\definecolor{pennblue}{cmyk}{0.99,0.66,0,0.57} 
	% RGB = (1,37,110) ; #01256e
\definecolor{pennlighterblue}{cmyk}{0.98,0.44,0,0.35}
	% RGB = (4,94,167); #045ea7
\definecolor{pennlightestblue}{cmyk}{0.38,0.17,0,0.17} 
	% RGB = (130,175,211); #82afd3

\definecolor{penndarkestred}{cmyk}{0,1,0.89,0.66}
	% RGB = (87,0,10); #57000a
\definecolor{penndarkerred}{cmyk}{0,1,0.88,0.55}
	% RGB = (116,0,14); #74000e
\definecolor{pennred}{cmyk}{0,1,0.83,0.42} 
	% RGB = (149,0,26); #95001a
\definecolor{pennlighterred}{cmyk}{0,1,0.6,0.24}
	% RGB = (194,0,77); #c2004d
\definecolor{pennlightestred}{cmyk}{0,0.43,0.26,0.12} 
	% RGB = (225,128,166); #e180a6

\definecolor{penndarkestgreen}{cmyk}{1,0,1,0.68}
	% RGB = (0,82,0); #005200
\definecolor{penndarkergreen}{cmyk}{1,0,1,0.57}
	% RGB = (0,110,0); #006e00
\definecolor{penngreen}{cmyk}{1,0,1,0.44} 
	% RGB = (0,142,0); #008e00
\definecolor{pennlightergreen}{cmyk}{1,0,1,0.25}
	% RGB = (0,190,0); #00be00
\definecolor{pennlightestgreen}{cmyk}{0.43,0,0.43,0.13}
	% RGB = (128,223,128); #80df80

\definecolor{penndarkestorange}{cmyk}{0,0.65,1,0.49}
	% RGB = (129,45,0); #812d00
\definecolor{penndarkerorange}{cmyk}{0,0.65,1,0.33}
	% RGB = (172,60,0); #ac3c00
\definecolor{pennorange}{cmyk}{0,0.54,1,0.24} 
	% RGB = (195,90,0); #c35a00
\definecolor{pennlighterorange}{cmyk}{0,0.32,1,0.13}
	% RGB = (223,151,0); #df9700
\definecolor{pennlightestorange}{cmyk}{0,0.15,0.46,0.06}
	% RGB = (239,203,128); #efcb80
	
\definecolor{penndarkestpurple}{cmyk}{0,1,0.11,0.86}
	% RGB = (35,0,31); #23001f
\definecolor{penndarkerpurple}{cmyk}{0,1,0.13,0.82}
	% RGB = (47,0,41); #2f0029
\definecolor{pennpurple}{cmyk}{0,1,0.11,0.71} 
	% RGB = (74,0,66); #4a0042
\definecolor{pennlighterpurple}{cmyk}{0,1,0.05,0.46}
	% RGB= (137,0,130); #890082
\definecolor{pennlightestpurple}{cmyk}{0,0.35,0.02,0.23}
	% RGB = (196,128,193); #c480c1
	
\definecolor{pennyellow}{cmyk}{0,0.20,1,0.05} 
	% RGB = (242,193,0); #f2c100
\definecolor{pennlightgray1}{cmyk}{0,0,0,0.05}
	% RGB = (242,242,243); #f2f2f3
\definecolor{pennlightgray3}{cmyk}{0.01,0.01,0,0.18}
	% RGB = (207,208,210); #cfd0d2
\definecolor{pennmediumgray1}{cmyk}{0.04,0.03,0,0.31}
	% RGB = (168,170,175); #a8aaaf
\definecolor{pennmediumgray4}{cmyk}{0.08,0.06,0,0.54}
	% RGB = (108,111,118); #6c6f76
\definecolor{penndarkgray2}{cmyk}{0.09,0.07,0,0.71}
	% RGB = (68,70,75); #44464b
\definecolor{penndarkgray4}{cmyk}{0.1,0.1,0,0.92}
	% RGB = (19,19,21); #131315

\def\SO3{\mathrm{SO(3)}}

\newtheorem{theorem}{\hspace{0pt}\bf Theorem}

\newtheorem{remark}{\hspace{0pt}\bf Remark}

\newtheorem{definition}{\hspace{0pt}\bf Definition}

\begin{document}
\begin{spacing}{1.57}

\title{Resource Allocation via \\Model-Free Deep Learning in \\Free Space Optical Communications
}
% Authors: full names plus addresses.
\author{Zhan Gao$^\star$, Mark Eisen$^\dagger$, and Alejandro Ribeiro$^\star$ 
\thanks{Preliminary results appear in GLOBECOM 2019 \cite{gao2019optimal}. $^\star$Department of Electrical and Systems Engineering, University of Pennsylvania, USA (Email: \{gaozhan,aribeiro\}@seas.upenn.edu). $^\dagger$Intel Corporation, USA (Email: mark.eisen@intel.com).}

}

\maketitle

\begin{abstract}
This paper investigates the general problem of resource allocation for mitigating channel fading effects in Free Space Optical (FSO) communications. The resource allocation problem is modeled as the constrained stochastic optimization framework, which covers a variety of FSO scenarios involving power adaptation, relay selection and their joint allocation. Under this framework, we propose two algorithms that solve FSO resource allocation problems. We first present the Stochastic Dual Gradient (SDG) algorithm that is shown to solve the problem exactly by exploiting the strong duality but whose implementation necessarily requires explicit and accurate system models. As an alternative we present the Primal-Dual Deep Learning (PDDL) algorithm based on the SDG algorithm, which parametrizes the resource allocation policy with Deep Neural Networks (DNNs) and optimizes the latter via a primal-dual method. The parametrized resource allocation problem incurs only a small loss of optimality due to the strong representational power of DNNs, and can be moreover implemented without knowledge of system models. A wide set of numerical experiments are performed to corroborate the proposed algorithms in FSO resource allocation problems. We demonstrate their superior performance and computational efficiency compared to the baseline methods in both continuous power allocation and binary relay selection settings.

%allow these parameters to evolve. The theoretically optimal step-size is selected, along with the adaptive batch-size which is increased when the convergence error saturates, ensuring provably fast convergence to ever-tighter error neighborhoods that attenuate to null in the limit. 
%

\end{abstract}

\begin{IEEEkeywords}
Free space optical communications, resource allocation, primal-dual method, deep learning
\end{IEEEkeywords}

\IEEEpeerreviewmaketitle

%!TEX root = mainOp.tex
%%%%%%%%%%%%%%%%%%%%%%%%%%%%%%%
%%% SECTION : Introduction  %%%
%%%%%%%%%%%%%%%%%%%%%%%%%%%%%%%

\section{Introduction} \label{sec:intro}

Free Space Optical (FSO) communication has attracted noticeable attention due to high capacity, low cost, strong security and flexible construction \cite{khalighi2014survey}. It transmits signals with optical carriers through the atmosphere and has found applications in satellite communications \cite{martini2002free}, last-mile access \cite{akella2007multi}, and fronthaul or backhaul for wireless cellular networks \cite{alzenad2018fso}. Despite this potential, FSO communication is susceptible to channel characteristics, such as atmospheric turbulence, weather conditions and background radiation \cite{andrews2005laser}. Different models were proposed to characterize the FSO channel, based on which a number of techniques were developed to mitigate channel effects \cite{borah2009pointing, gao2017beam, gao2018beam, gao2019beam, zhang2019ergodicity, kiasaleh2005performance, navidpour2007ber}. Cooperative transmission has recently been introduced as one of such techniques in FSO communications, which improves the system performance by leveraging optimal resource allocation \cite{abou2010cooperative}. That is, it allocates resources adaptively based on the channel state information (CSI) in order to optimize the system performance. Common examples of FSO resource allocation problems include power adaptation, relay selection and their joint allocation. 

Power adaptation has emerged as a popular cooperative transmission technique to mitigate channel fading effects, but the conventional adaptation method in radio frequency (RF) channels does not apply directly to optical channels \cite{park2013power}. The works in \cite{zhou2015optical, sun2018beam, park2013power} assign adaptive powers to orthogonal optical carriers maximizing the channel capacity with total and peak power constraints. The authors in \cite{abou2011cooperative, abd2017effect} minimize the outage probability with respective power allocation strategies. Other applications include the security performance \cite{abd2017effect}, the spectral efficiency \cite{hassan2018delay}, etc. Relay-assisted communication, on the other hand, employs multiple relay nodes between the transmitter and the receiver to create a virtual multiple-aperture FSO system \cite{safari2008relay, karimi2011free, karimi2009ber1}. However, it is not practical to activate all available relays that requires perfect transmission synchronization. The works in \cite{chatzidiamantis2013relay, abou2013performance} developed relay selection protocols for optimal outage and error probabilities, and the authors in \cite{kashani2013optimal} considered both serial and parallel relays to improve the system performance. Furthermore, joint power and relay allocation algorithms were developed in FSO networks, in order to maximize the network throughout and minimize the outage probability \cite{zhou2013joint, hassan2017statistical, hassan2017delay}. However, the aforementioned works are restricted by both or one of the following limitations.
\begin{enumerate}[]
\item \textbf{L.1} Approximation approaches are required to simplify optimization problems or to obtain convex relaxations.

\item \textbf{L.2} The implementation of these methods requires complete knowledge of system models (e.g., capacity functions and channel distributions).
\end{enumerate}
\textbf{L.1} results in inexact solutions that degrade performance and/or require high computational cost. \textbf{L.2} yields solutions that depend on the system model information, which may be unavailable or inaccurate, thus introducing inevitable errors. \textbf{L.1} and \textbf{L.2} further tie the solution methods to specific use-cases and do not necessarily generalize to changes in system structure, i.e., the methods become inapplicable or require significant modifications when changing FSO systems. These limitations provide an incomplete solution to the design of generic resource allocation policies in FSO communications.

These challenges of existing FSO resource allocation methods make the application of machine learning methods appealing, due to their low complexity, potential for model-free implementation, and transference to unseen scenarios. Deep Neural Networks (DNNs) have been developed as predominant tools to analyze data for target information and have achieved resounding success in many communication, signal processing and control problems \cite{liu2017survey, canziani2016analysis, sanchez2018real}. In particular, DNNs have been applied for power allocation of the interference management problem in wireless RF domain \cite{sun2017learning, xu2017deep, eisen2019learning}. In FSO communication domain, DNNs have been utilized for assisting channel estimation \cite{amirabadi2020deep, lohani2018turbulence} and signal modulation/demodulation \cite{darwesh2020deep, lee2019deep}. While to the best of our knowledge, deep learning approaches have not yet been systematically explored for general resource allocation problems in FSO systems. 

In this paper we study the application of dual domain optimization and deep learning methods in a wide array of resource allocation problems in FSO communications. Given the objective with a set of constraints, we formulate the FSO resource allocation problem as the constrained stochastic optimization problem and seek an optimal resource allocation policy that adapts to the channel state information (Section \ref{sec:problem}). To demonstrate the generality of our framework, we exemplify with problems of power adaptation in Radio on FSO systems (Section \ref{poweradaption}), relay selection in relay-assisted FSO networks (Section \ref{relayselection}), and joint power and relay allocation in FSO fronthaul networks (Section \ref{powerrelay}). Such resource allocation problems are typically challenging due to the non-convexity of complicated objective, existence of constraints, infinite dimensionality of resource allocation policy, and lack of system model knowledge. We propose the use of dual optimization and learning framework to address these challenges and provide a comprehensive solution methodology. More in detail, our contributions are as follows. 

\smallskip
\begin{enumerate}[(i)]
\item We propose the Stochastic Dual Gradient (SDG) algorithm to overcome the limitation \textbf{L.1} (Section \ref{sec_sdg}). The SDG algorithm is demonstrated to solve FSO resource allocation problems exactly by utilizing the strong duality. The latter allows us to operate in the dual domain, which is convex, unconstrained and finite dimensional, without loss of optimality. The SDG further saves computational cost by performing primal-dual gradient updates, which avoids computing KKT conditions. Despite the theoretical advantages, this algorithm is limited as it is model-based that requires knowledge of specific system models.

\item We propose the Primal-Dual Deep Learning (PDDL) algorithm as a model-free, deep learning based alternative to overcome the limitation \textbf{L.2} (Section \ref{sec_pddl}). The PDDL parameterizes the resource allocation policy with DNNs and reformulates the problem as a constrained machine learning problem. It leverages an approximate strong duality to train DNNs with an unsupervised primal-dual method. A model-free implementation is obtained by using the policy gradient method, which does not require the knowledge of system models. The PDDL further achieves lower complexity due to the computational efficiency of DNNs. 

%DNNs parameterize the resource allocation policy when we reformulate the problem as a constrained machine learning problem. We moreover leverage an approximate strong duality result that permits to train DNNs with an unsupervised primal-dual method. A model-free implementation is obtained by using the policy gradient method, which does not require the knowledge of system models. The PDDL further achieves lower complexity due to the computational efficiency of DNNs.

\item The overall methodology resulting from both algorithms does not depend on specific systems or problem settings, and thus is applicable comprehensively in the context of FSO communications. We perform numerical experiments in a variety of practical FSO communication scenarios, including power adaptation, relay selection and their joint allocation (Section \ref{sec:sims}). In all scenarios, we illustrate success of the proposed algorithms, validating their transference to changes in system structure. 
\end{enumerate}

%!TEX root = mainOp.tex
%%%%%%%%%%%%%%%%%%%%%%%%%%%%%%%%%%%%%%%%%%%%%
%%% SECTION : Problem Formulation and Background  %%%
%%%%%%%%%%%%%%%%%%%%%%%%%%%%%%%%%%%%%%%%%%%%%

\section{Problem Formulation} \label{sec:problem}
%\subsection{Problem Formulation}

Consider a general Free Space Optical (FSO) communication system under some form of resource constraints. By adaptively allocating resources using a policy that responds to instantaneous fading effects of the atmospheric channel, we can mitigate these fading effects and optimize the system performance. Denote by $\bbh \in \mathbb{R}^m$ the collected channel state information (CSI) and $\bbr(\bbh) \in \mathbb{R}^n$ a policy that determines the allocated resources based on the observed $\bbh$. The objective function $f(\bbh, \bbr(\bbh))$ measures the system performance that is instantiated on $\bbh$ and $\bbr(\bbh)$. Furthermore, a total of $S$ constraints are imposed either on the resources $\bbr(\bbh)$ or on the objective function $f(\bbh, \bbr(\bbh))$, each of which is represented by a constraint function $c_s(\bbr(\bbh), f(\bbh, \bbr(\bbh)))~\forall~s=1,\ldots,S$. The atmospheric channel is typically considered as a fading process with channel coherence time on the order of milliseconds, such that we shall assume $\bbh$ is drawn from an ergodic and i.i.d block fading process. In this context, the instantaneous system performance tends to vary fast and the long term average performance $\mathbb{E}_\bbh[f(\bbh, \bbr(\bbh))]$ is the more meaningful metric to consider when designing an optimal resource allocation policy. We similarly consider constraints to be satisfied in expectation.

\begin{figure*}%
\centering
\begin{subfigure}{0.33\columnwidth}
\includegraphics[width=1.0\linewidth, height = 0.55\linewidth]{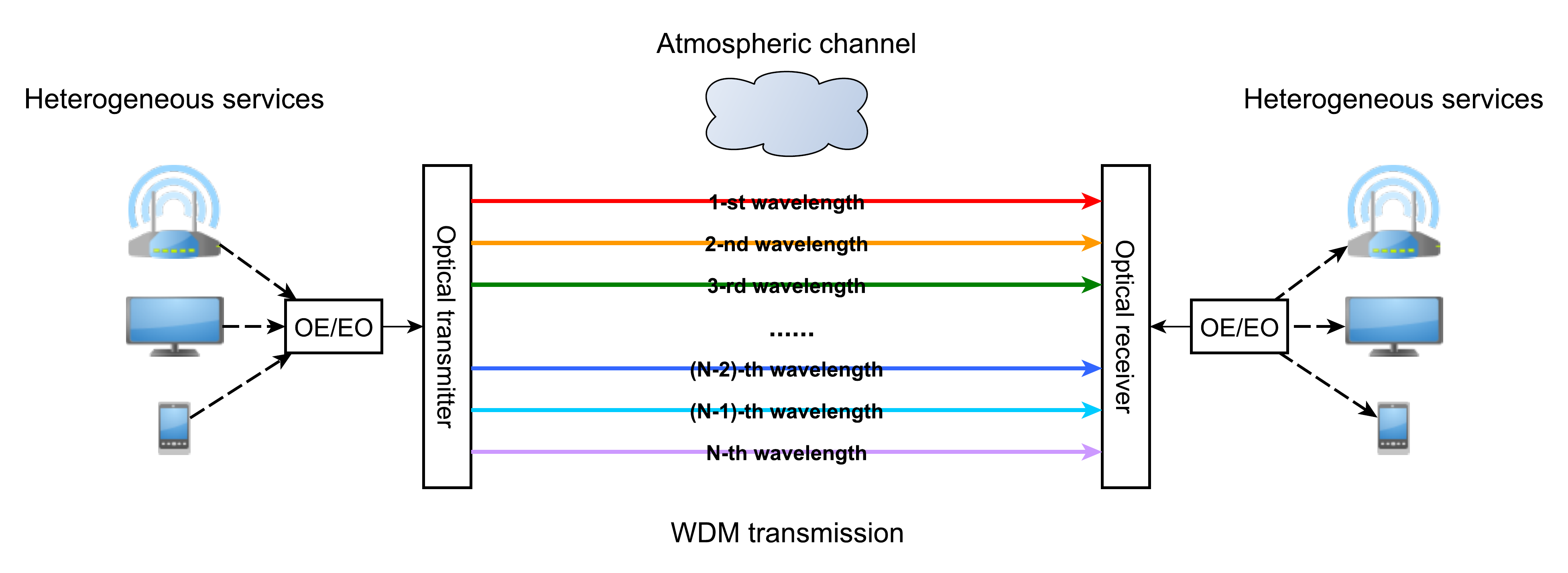}%
\caption{}%
\label{fig_rofso}%
\end{subfigure}\hfill\hfill%
\begin{subfigure}{0.33\columnwidth}
\includegraphics[width=1.0\linewidth,height = 0.55\linewidth]{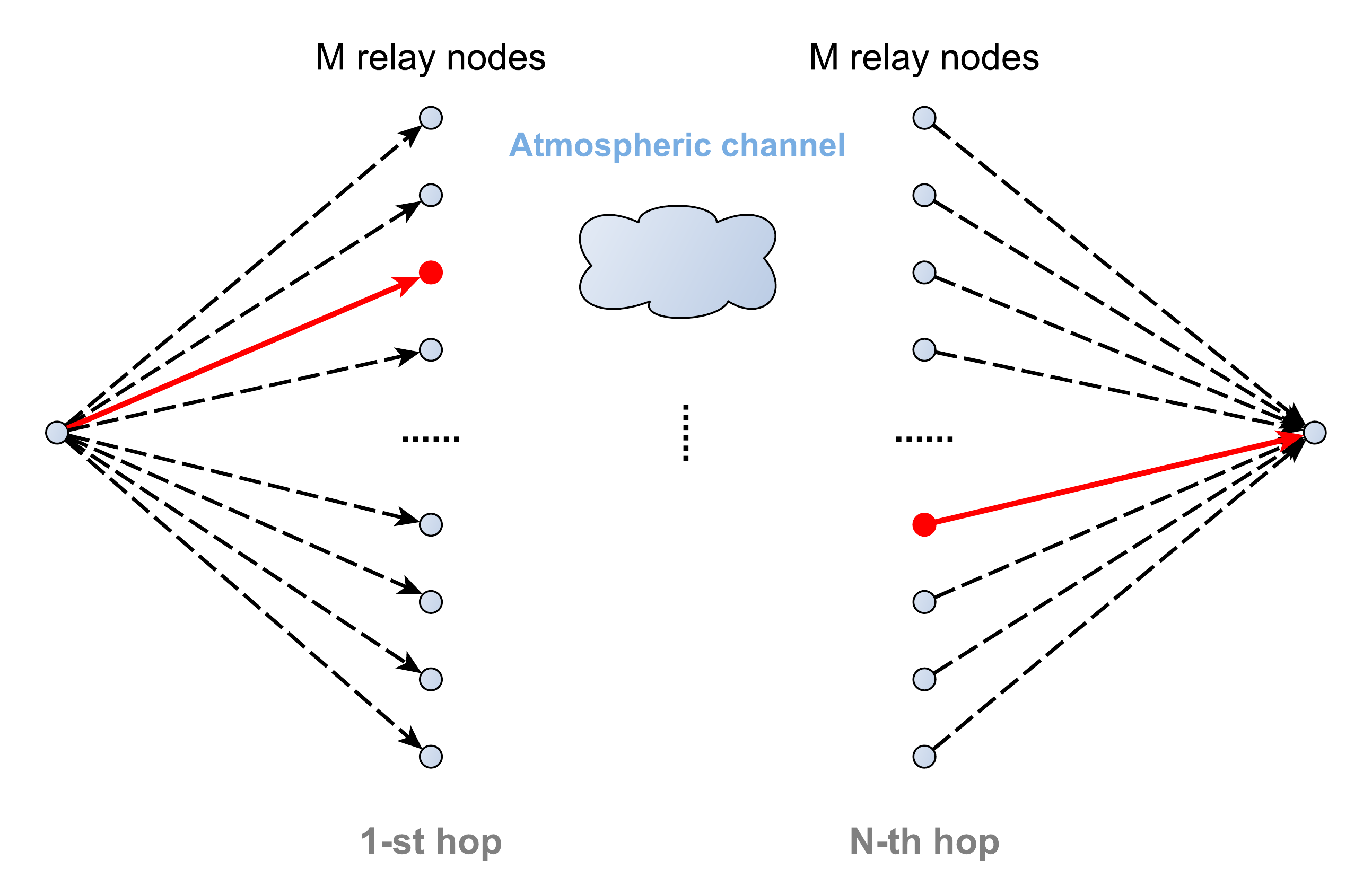}%
\caption{}%
\label{fig_relay1}%
\end{subfigure}\hfill\hfill%
\begin{subfigure}{0.33\columnwidth}
\includegraphics[width=1.0\linewidth,height = 0.55\linewidth]{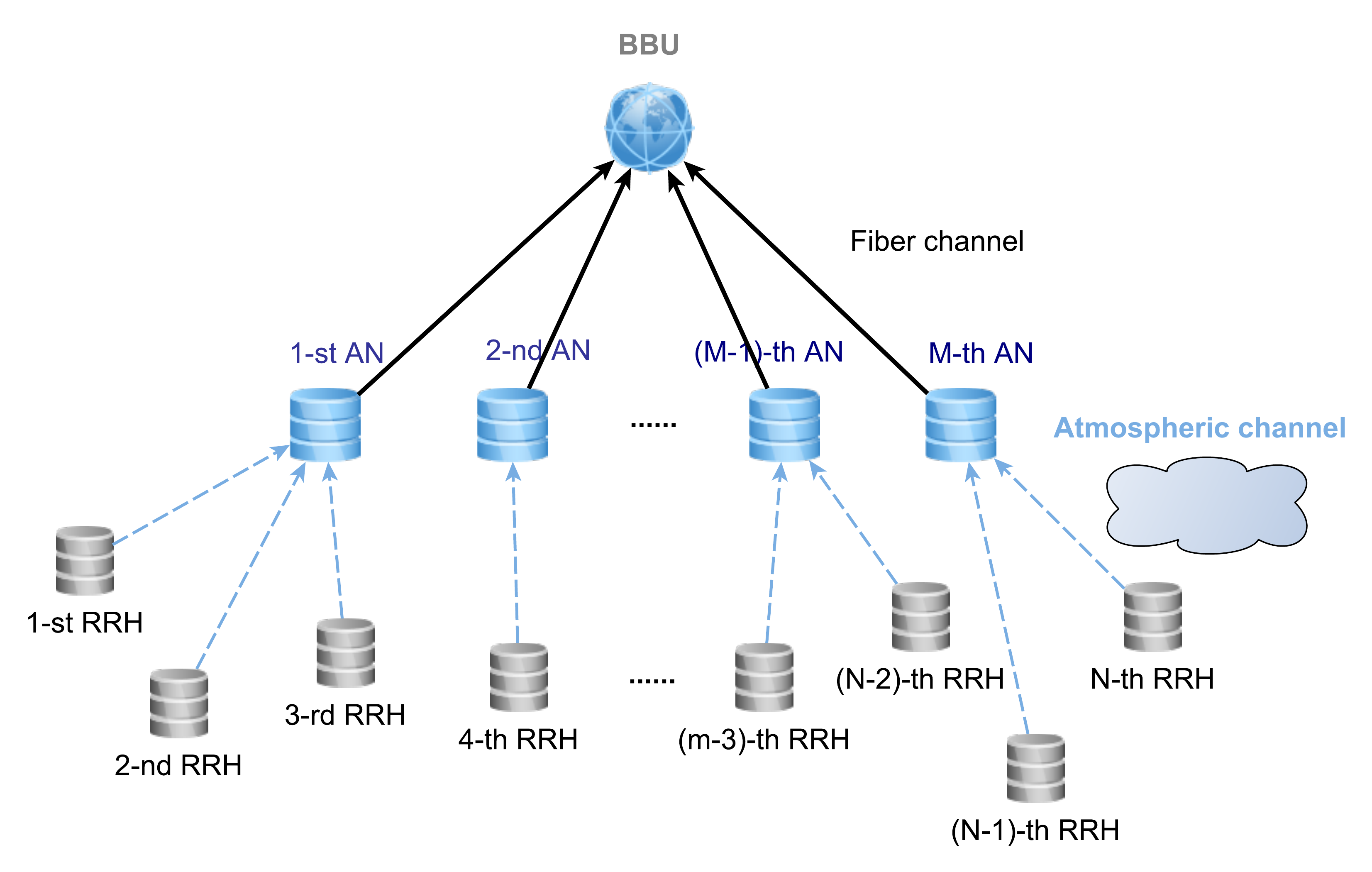}%
\caption{}%
\label{fronthaul}%
\end{subfigure}%
\caption{(a) The DWDM RoFSO system with $N$ optical wavelength channels. (b) The relay-assisted FSO network. The transmitter communicates with the receiver through $N$ selected relays (red nodes). (c) The fronthaul FSO network with $N$ RRHs and $M$ ANs.}\label{FSOsystems}\vspace{-5mm}
\end{figure*}

Our goal is to maximize the expected performance $\mathbb{E}_\bbh[f(\bbh, \bbr(\bbh))]$ given certain resource constraints. In particular, we seek to compute the instantaneous allocated resources $\bbr(\bbh)$ based on the instantaneous CSI $\bbh$, that satisfy required constraints and optimize the system performance. By introducing $\ccalR$ as the action space of allocated resources $\bbr(\bbh)$, we formulate the optimal resource allocation as the following stochastic optimization problem
\begin{alignat}{3} \label{eq_problem_resource}
 \mathbb{P}:= &  \max_{\bbr(\bbh)} \ && \mathbb{E}_\bbh \left[f(\bbh, \bbr(\bbh))\right] ,             \\
        &  \st                    \ && \mathbb{E}_\bbh \! \left[ c_s\big(\bbr(\bbh), f(\bbh, \bbr(\bbh))\big) \right] \!\le\! 0~\forall~ s\!=\!1,\!...\!, S,~~~  \bbr(\bbh) \in \ccalR. \nonumber%
\end{alignat}
We stress in \eqref{eq_problem_resource} that the objective function $f(\bbh, \bbr(\bbh))$, the constraint functions $\{c_s\big(\bbr(\bbh), f(\bbh, \bbr(\bbh))\big)\}_{s=1}^S$ and the set $\ccalR$ are not necessarily convex depending on specific applications. In fact, in most practical scenarios, they are non-convex given the complexity of FSO systems. In general, the objective function is typically complicated and the allocated resources can be both continuous and discrete, such that solving the resource allocation problem \eqref{eq_problem_resource} can be difficult. There are mainly four challenges in our concern:
\begin{enumerate}[(i)]
\item The objective function $f(\bbh, \bbr(\bbh))$ can be extremely complicated in FSO systems, yielding non-convex optimization problems---see Sections \ref{poweradaption} and \ref{relayselection}.
\item The imposed constraints $\{c_s\big(\bbr(\bbh), f(\bbh, \bbr(\bbh))\big)\}_{s=1}^S$ are difficult to address, resulting in failures of conventional optimization algorithms---see Section \ref{powerrelay}.
\item The variable to be optimized $\bbr(\bbh)$ is a function of the channel state information $\bbh$ and consequently is infinitely dimensional.
\item FSO systems are sophisticated due to the complexity of optical equipments. Mathematical models $f(\bbh, \bbr(\bbh))$ characterizing these systems may be unknown or inaccurate such that model-based algorithms are inapplicable or suffer from inevitable degradations.
\end{enumerate}

In what follows, we first propose a model-based algorithm that solves the problem \eqref{eq_problem_resource} exactly without any approximation (Section \ref{sec_sdg}). We proceed to develop a model-free algorithm via deep learning that solves \eqref{eq_problem_resource} with only system observations, where the knowledge of system models is not required (Section \ref{sec_pddl}). Before proceeding, we illustrate in the following subsections how the general problem framework in \eqref{eq_problem_resource} represents a variety of optimal resource allocation problems in FSO communications.

\subsection{Power Adaptation} \label{poweradaption}

We consider the transmission power allocation in a Radio on Free Space Optics (RoFSO) system. As a universal platform for heterogeneous wireless services, it transmits RF signals through FSO links in optical networks \cite{zhou2015optical}. The Dense Wavelength Division Multiplexing (DWDM) RoFSO system allows simultaneous transmissions of multiple signals to increase the transmission capacity. In particular, multimedia RF signals are accessed into the RoFSO system and placed on multiple optical wavelength carriers with optoelectronic devices, and then transmitted into free space. At the receiver, optical signals are received through FSO channels and transferred back to RF signals for users---see Fig. \ref{fig_rofso}.

Based on the CSI, adaptive powers are assigned to different wavelengths to maximize the total channel capacity. Assume that there are $N$ optical wavelength carriers with non-overlapping space between each other. The CSI is represented by the vector $\bbh = [h_1,\ldots,h_N]^\top \in \mathbb{R}^N$, where $h_i$ is the CSI of $i$-th wavelength channel. The allocated power to signal transmitted on $i$-th wavelength is based upon the observed CSI $\bbh$ via a power allocation policy $p_i(\bbh)$. Given the collection of power allocations $\bbr(\bbh)=[p_1(\bbh),\ldots,p_N(\bbh)]^\top \in \mathbb{R}^N$ and the CSI $\bbh$, the channel capacity $C_i(\bbh, \bbr(\bbh))$ achieved on $i$-th wavelength is  \cite{zhou2015optical}
\begin{align} \label{eq_capacity}
&C_i(\bbh, \bbr(\bbh))=C_i(h_i, p_i(\bbh))= \log \!\Big( \!1\!+\! \frac{\frac{1}{2}(OMI\cdot m_prp_i(\bbh)h_i)^2}{RIN \cdot (rp_i(\bbh)h_i)^2+2em_p^{2+F}rp_i(\bbh)h_i\!+\!\frac{4KT}{R_f}} \Big)
\end{align}
with $OMI$ the optical modulation index, $RIN$ the relative intensity noise, $m_p$ the photodiode gain, $r$ the photodiode responsivity, $e$ the electric charge, $F$ the excess noise factor, $K$ the Boltzmann's constant, $T$ the temperature and $R_f$ the photodiode resistance. We consider the weight vector $\bbomega \!=\! [\omega_1,\ldots,\omega_N]^\top \in \mathbb{R}^N$ to represent priorities of different wireless services, and the objective function is the weighted sum of channel capacities over $N$ wavelengths   
\begin{align} \label{eq_capacity1}
\mathbb{E}_\bbh \left[f(\bbh, \bbr(\bbh))\right]= \sum_{i=1}^N \omega_i \mathbb{E}_\bbh\left[C_i(\bbh, \bbr(\bbh))\right].
\end{align}
The RoFSO system is constrained by a total power limitation $P_t$ at the base station, i.e., $\mathbb{E}_\bbh \left[ c(\bbr(\bbh)) \right] = \mathbb{E}_\bbh\Big[\sum_{i=1}^N p_i(\bbh)\Big] - P_t \le 0$, and a peak power limitation $P_s$ for each carrier to ensure eye safety, i.e., $\ccalR = [0,P_s]^N$.

\subsection{Relay Selection} \label{relayselection}

We consider the relay-assisted FSO network, in which the transmitter communicates with the receiver through intermediate hops \cite{hassan2016statistical}. In particular, assume that there are $N$ hops where each hop consists of $M$ parallel relays. The transmitter sends the optical signal to a selected relay at $1$-st hop. The latter amplifies the received signal and then transmits it to a selected relay at $2$-nd hop. The process performs successfully through $N$ hops until the receiver---See Fig. \ref{fig_relay1}. Based on the CSI, different relays are selected at different hops to maximize the channel capacity. We denote by $\bbh \in \reals^{(M \!\times\! N \!+\! 2)\!\times\! M}$ the CSI between the transmitter, relays and the receiver, and the matrix $\bbr(\bbh) \!=\! [\bbalpha_1(\bbh), \ldots, \bbalpha_N(\bbh)]^\top \in \{ 0, 1 \}^{N \times M}$ the selected relays, where each $\bbalpha_i(\bbh) = [\alpha_{i1}(\bbh),\ldots,\alpha_{iM}(\bbh)]^\top \in \{ 0, 1 \}^{M}$ is a $M$-dimensional vector with $\alpha_{ij}(\bbh)=1$ if $j$-th relay is selected at $i$-th hop and $\alpha_{ij}(\bbh)=0$ otherwise. The relay-assisted channel capacity is \cite{hassan2017statistical}
\begin{align} \label{eq_capacity35}
&C_{j_1 \ldots j_N} \!(\bbh)\!=\!\!\frac{T_f B}{\epsilon}\! \log \!\Big( 1\!+\! \Big( \prod_{i=0}^N\! \Big( 1\!+\!\frac{1}{P h_{j_{i}j_{i+1}} \frac{R}{e \Delta f}} \Big)\!-\!1 \Big)^{-\!1} \Big)
\end{align}
which assumes that $j_i$-th relay is selected at $i$-th hop and $h_{j_{i}j_{i+1}}$ is the CSI between $j_{i}$-th relay at $i$-th hop and $j_{i+1}$-th relay at $(i+1)$-th hop, where $j_0 \!=\! j_{N\!+\!1} \!=\!1$ represent the transmitter and the receiver. Here, $T_f$ is the frame duration, $B$ the bandwidth, $P$ the transmission power, $R$ the photodetector sensitivity, $e$ the electric charge, $\Delta f$ the noise equivalent bandwidth, $\epsilon=1$ for the full-duplex relay and $\epsilon=2$ for the half-duplex relay. The objective function is then given by
\begin{align} \label{eq_capacity3}
&\mathbb{E}_\bbh \!\left[f(\bbh, \bbr(\bbh))\right]\!=\!\mathbb{E}_\bbh\Big[ \sum_{j_N = 1}^M \!\cdots\! \sum_{j_1 = 1}^M\! \Big(\prod_{i=\!1}^N\! \alpha_{i j_i}(\bbh)\! \Big) C_{j_1 \ldots j_N} (\bbh)\Big].
\end{align}
There are $N$ constraints on the selected relays $\bbr(\bbh)$. That is only one relay can be selected at each hop, i.e., $\ccalR \!=\! \Big\{\! \{0,1\}^{N\times M} | \sum_{j=1}^M \alpha_{ij}(\bbh) \leq 1,\forall~i\!=\!1,...,N \!\Big\}$. There is no additional stochastic constraint in this example.

\subsection{Joint Power and Relay Allocation} \label{powerrelay}

The resource allocation problem becomes more complicated when considering the joint power and relay allocation, as seen in the FSO fronthaul network \cite{hassan2017delay, hassan2017statistical}. As one of cloud radio access network (C-RAN) architectures, it provides high rates, low latency and flexible constructions for 5G wireless networks. In particular, the system consists of remote radio heads (RRHs), aggregation nodes (ANs) and the baseband unit (BBU). The RRHs transmit optical signals with orthogonal optical carriers through free space to the selected ANs. The latter collect received signals and then forward the aggregated signal to the BBU through high speed optical fiber---See Fig. \ref{fronthaul}. Based on the CSI, different ANs are selected at different RRHs and adaptive powers are assigned to different optical carriers at each RRH. Assume there are $L$ optical carriers, $N$ RRHs, $M$ ANs and one BBU. The CSI is represented by $\bbh = \{\bbh_{ij}\}_{ij}$ for all $i\!=\!1,\!...\!,N$ and $j\!=\!1,\!...\!,M$, where each vector $\bbh_{ij}\!=\![h^1_{ij},...,h^L_{ij}]^\top \!\in\! \mathbb{R}^{L}$ is the CSI of $L$ optical carriers between $i$-th RRHs and $j$-th AN. The allocated resources $\bbr(\bbh)\!=\!\{ \bbp_{ij}(\bbh), \alpha_{ij}(\bbh)\}_{ij}$ contain assigned powers and selected ANs, where $\bbp_{ij}(\bbh) = [p_{ij}^1(\bbh),...,p_{ij}^L(\bbh)]^\top \!\in\! \mathbb{R}^{L}$ are powers assigned to $L$ optical carriers in the link between $i$-th RRH and $j$-th AN, and $\alpha_{ij}(\bbh) \in \{ 0,1 \}$ is the indicator being one if $j$-th AN is selected at $i$-th RRH and zero otherwise. The channel capacity between $i$-th RRH and $j$-th AN is \cite{hassan2017delay}
\begin{align} \label{eq_capacity4}
&C_{ij}(\bbh, \bbr(\bbh)) =\sum_{\ell=1}^L \omega_{\ell} \frac{T_f \!B}{\epsilon} \log \Big( 1+ p_{ij}^\ell(\bbh) h_{ij}^\ell \frac{R}{e \Delta f} \Big)
\end{align}
with $\bbomega = [\omega_1,\ldots,\omega_L]^\top \in \mathbb{R}^L$ the priorities of optical carriers, $T_f$ the frame duration, $B$ the bandwidth, $R$ the photodetector sensitivity, $e$ the electric charge and $\Delta f$ the noise equivalent bandwidth. The objective function is the sum-capacity over $N$ RRHs
\begin{align} \label{eq_fronthaulFSO_capacity}
&\mathbb{E}_\bbh \left[f(\bbh, \bbr(\bbh))\right]=\mathbb{E}_\bbh\!\Big[\sum_{i=1}^N\!  \sum_{j=1}^M \alpha_{ij}(\bbh)C_{ij}(\bbh, \bbr(\bbh))\Big].
\end{align}
There are $2N$ constraints for the allocated powers, $N$ constraints for the selected ANs and additional $M$ constraints for the aggregated data at ANs: (i) the total power limitation $P_t$ and the peak power limitation $P_s$ at each RRH as in Section \ref{poweradaption}; (ii) only one AN can be selected at each RRH as in Section \ref{relayselection}; (iii) the aggregated data traffic shall not exceed the maximal capacity $C_t$ of optical fiber at each AN to avoid data congestion. Therefore, we have
\begin{subequations}\label{eq_fronthaulFSO_constraints}
\begin{align}
&\mathbb{E}_\bbh\!\Big[\sum_{\ell=1}^L\! p_{ij}^\ell(\bbh)\Big] - P_t \le\! 0,~\forall~i =1,...,N, j=1,...,M,\\
&\mathbb{E}_\bbh\!\Big[\sum_{i=1}^N C_{ij}(\bbh, \bbr(\bbh))\Big] \!- C_t \le\! 0, ~\forall~ j\!=\!1,\ldots,M,\\
\label{eq_fronthaulFSO_constraintsc}
&\!\ccalR\! =\! \Big\{ [0,P_s]^{N\times M\times L} \!\times\! \{ 0,\!1 \}^{N\times M} | \sum_{j=1}^M\! \alpha_{ij}(\bbh) \leq 1,\forall~i =1,...,N \Big\} .
\end{align}
\end{subequations}

%%%%%%%%%%%%%%%%%%%%%%%%%%%%%%%%%%%%%%%%%%%%%%%%%%%%%%%%%%%%%%%%%%%%%%%%%%%%%%%%%%%%%%%%%%%%%%%%%%%%%%%%%%%%%%%%%%%%%%%%%%%%%%%%%%%%%%%%%%%%%%%%%%%%%%%%%%%%%%%%%%%%%%%%%%A  S  S  U  M  P  T  I  O  N %%%%%%%%%%%%%%%%%%%%%%%%%%%%%%%%%%%%%%%%%%%%%%%%%%%%%%%%%%%%%%%%%%%%%%%%%%%%%%%%%%%%%%%%%%%%%%%%%%%%%%%%%%%%%%%%%%%%%%%%%%%%%%%%%%%%%%%%%%%%%%%%%%%%%%%%%%%%%%%%%%%%%%%%%%%%%%%%%%%%%%%%%%%%%%%%

%!TEX root = mainOp.tex
%%%%%%%%%%%%%%%%%%%%%%%%%%%%%
%%% SECTION : Stability   %%%
%%%%%%%%%%%%%%%%%%%%%%%%%%%%%

\section{Stochastic Dual Gradient Algorithm}\label{sec_sdg}

In this section, we first address three primary challenges (i)-(iii) outlined in Section \ref{sec:problem} by working in the dual domain. In particular, by establishing a null duality gap for \eqref{eq_problem_resource}, we present the Stochastic Dual Gradient (SDG) algorithm that finds exact solutions despite the non-convexity, constraints, and infinite dimensionality. For the purposes of developing the SDG algorithm, we initially ignore challenge (iv) and assume mathematical models established for FSO systems are given and accurate. For instance, in the RoFSO system we assume the channel capacity function $C_i(\bbh, \bbr(\bbh))$ in \eqref{eq_capacity} characterizes the RoFSO system accurately.

With a set of convex or non-convex constraints, it is natural to consider working in the dual domain. By introducing the dual variables $\bblambda = [\lambda_1, \ldots, \lambda_S]^\top \in \mathbb{R}_+^{S}$ that correspond to $S$ constraints, the Lagrangian of problem \eqref{eq_problem_resource} is given by
\begin{align} \label{eq_lagran}
&\mathcal{L}(\bbr(\bbh),\bblambda) = \mathbb{E}_\bbh\! \left[f(\bbh, \bbr(\bbh))\right] - \sum_{s=1}^S \lambda_s \mathbb{E}_\bbh \! \left[ c_s\left(\bbr(\bbh), f(\bbh, \bbr(\bbh))\right) \right].
\end{align}
Each constraint in \eqref{eq_problem_resource} shows as a penalty in \eqref{eq_lagran}, where the violation is penalized (weighted by a dual variable). We define the dual function as the maximum of Lagrangian
\begin{equation} \label{dualfunc}
\begin{split}
\mathcal{D}(\bblambda) & = \max_{\bbr(\bbh)\in \mathcal{R}} \mathcal{L}(\bbr(\bbh),\bblambda).
\end{split}
\end{equation}
The problem \eqref{dualfunc} is unconstrained such that conventional optimization algorithms can be used. With dual variables involved, it has been proved that $\ccalD(\bblambda) \ge \mathbb{P}$ holds for any $\bblambda$. This result motivates the development of dual problem, that is to find $\bblambda^*$ minimizing the dual function as
\begin{equation} \label{eq_dualprob}
\begin{split}
\mathbb{D} := \min_{\bblambda \ge 0} \mathcal{D}(\bblambda)=\min_{\bblambda \ge 0} \max_{\bbr(\bbh)\in \mathcal{R}} \mathcal{L}(\bbr(\bbh),\bblambda).
\end{split}
\end{equation}
The optimal solution $\mathbb{D}$ for \eqref{eq_dualprob} can be viewed as the best approximation of $\mathbb{P}$ when handling constraints as penalties. However, it is still unclear how much the difference between $\mathbb{D}$ and $\mathbb{P}$ is and further how to develop an algorithm to solve the alternative min-max problem \eqref{eq_dualprob}. We consider these issues in following subsections.

\subsection{Null Duality Gap}

For the ideal scenario, one would expect the difference $\mathbb{D}-\mathbb{P}$, referred to as the duality gap, to be zero. As such, we can solve the general resource allocation problem \eqref{eq_problem_resource} by solving its associated dual problem \eqref{eq_dualprob} without loss of optimality. It is well-known that the null duality gap holds for convex optimization problems, which the problem \eqref{eq_problem_resource} rarely leads to due to complicated objective and constraint functions. Despite its possible non-convexity, we show that the problem does have the null duality gap in the following theorem.

\begin{theorem}\label{theorem1}
Consider the stochastic optimization problem \eqref{eq_problem_resource} and its associated dual problem \eqref{eq_dualprob}. Let $\mathbb{P}$ be the optimal solution of \eqref{eq_problem_resource} and $\mathbb{D}$ be the optimal solution of \eqref{eq_dualprob}. Assume that there exists a feasible point $\bbr_0$ satisfying all constraints with strict inequality, then the strong duality holds that $\mathbb{P}=\mathbb{D}$.
\end{theorem}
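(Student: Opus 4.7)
The plan is to prove strong duality by working with the value/perturbation function associated with the primal. Define
\[
P(\bbxi) := \sup_{\bbr(\bbh)\in\ccalR}\Big\{\mathbb{E}_\bbh[f(\bbh,\bbr(\bbh))] : \mathbb{E}_\bbh[c_s(\bbr(\bbh),f(\bbh,\bbr(\bbh)))]\le \xi_s,~\forall s=1,\ldots,S\Big\},
\]
so that $\mathbb{P}=P(\bbzero)$ and, by the standard minimax identity,
\[
\mathbb{D}=\inf_{\bblambda\ge\mathbf{0}}\sup_{\bbxi\in\mathbb{R}^S}\big\{P(\bbxi)-\bblambda^{\!\top}\!\bbxi\big\}.
\]
The right-hand side is the evaluation at $\bbzero$ of the smallest concave, upper semi-continuous majorant of $P$. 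The theorem therefore reduces to showing (a) $P$ is concave, and (b) $P$ is upper semi-continuous at $\bbzero$; together these force $\mathbb{D}=P(\bbzero)=\mathbb{P}$.

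For (a), I would exploit the infinite-dimensional nature of the policy space combined with the non-atomic channel distribution implicit in the continuous i.i.d.\ block fading assumption of Section \ref{sec:problem}. Given any two feasible policies $\bbr_1,\bbr_2$ with achievable vectors $\bbv_i=\big(\mathbb{E}[f(\bbh,\bbr_i)],\mathbb{E}[c_1(\bbr_i,f)],\ldots,\mathbb{E}[c_S(\bbr_i,f)]\big)$ and any $\theta\in[0,1]$, Lyapunov's theorem on the range of a non-atomic $(S{+}1)$-dimensional vector measure guarantees a measurable set $A_\theta$ such that the time-sharing policy
\[
\bbr(\bbh)=\bbr_1(\bbh)\mathbf{1}_{A_\theta}(\bbh)+\bbr_2(\bbh)\mathbf{1}_{A_\theta^{\mathrm c}}(\bbh)
\]
reproduces the convex combination $\theta\bbv_1+(1-\theta)\bbv_2$ \emph{exactly}. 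This policy is feasible because $\ccalR$ is a pointwise constraint on $\bbr(\bbh)$. Taking suprema over $\bbr_1,\bbr_2$ in the definition of $P$ then yields concavity. Monotonicity of $P$ (increasing $\bbxi$ only enlarges the feasible set) is immediate.

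For (b), the Slater-type hypothesis on $\bbr_0$ enters: strict feasibility means $\mathbb{E}[c_s(\bbr_0,f)]<0$ for every $s$, so $P$ is finite on an open neighborhood of $\bbzero$. A concave function finite on an open subset of $\mathbb{R}^S$ is automatically continuous there, which gives upper semi-continuity at $\bbzero$ and thus $\mathbb{D}=\mathbb{P}$. The main obstacle is the Lyapunov step in (a): it hinges essentially on the channel measure being non-atomic, without which the mixture of two policies cannot in general be realized as a single deterministic policy of $\bbh$, and the concavity of $P$ can fail when $\ccalR$ is discrete (e.g., the binary relay-selection setting of Section \ref{relayselection}). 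The remaining pieces are standard convex analysis given concavity and Slater regularity.
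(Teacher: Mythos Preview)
Your argument is correct and, in fact, unpacks precisely the mechanism behind the result the paper invokes. The paper's own proof is a one-line citation: it observes that the channel distribution $m(\bbh)$ is continuous (hence non-atomic), notes the Slater hypothesis, and then appeals directly to Theorem~1 of \cite{chamon2020functional} on sparse functional programs to conclude $\mathbb{P}=\mathbb{D}$. Your perturbation-function route---Lyapunov's convexity theorem on the $(S{+}1)$-dimensional vector measure to show the achievable set (equivalently, $P$) is concave, then Slater to place $\bbzero$ in the interior of the effective domain and obtain continuity there---is exactly the machinery that sits inside that cited result and its antecedents in stochastic network optimization. So the two proofs coincide at the level of ideas; you have simply written out what the paper outsources to a reference. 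Your remark that non-atomicity of the channel law is the load-bearing hypothesis is also on point and matches the paper's explicit reliance on $m(\bbh)$ being continuous.

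One small gap worth closing: the step ``concave and finite on an open neighborhood of $\bbzero$ $\Rightarrow$ continuous at $\bbzero$'' requires $P$ to be proper, in particular $P(\bbzero)<+\infty$. Slater and monotonicity give $P>-\infty$ near $\bbzero$, but finiteness from above needs either $\mathbb{E}_\bbh[f(\bbh,\bbr(\bbh))]$ to be bounded over $\ccalR$ or $P(\bbzero)=+\infty$ to be handled separately (in which case $\mathbb{D}\ge\mathbb{P}=+\infty$ trivially). This is implicit in the paper's setting but should be stated.
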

\begin{proof}
The objective and the constraints in \eqref{eq_problem_resource} are in expectation with respect to the CSI $\bbh$, which is instantiated from the probability distribution $m(\bbh)$. We can consider \eqref{eq_problem_resource} as a particular realization of the sparse functional program \cite{chamon2020functional}. Since the distribution $m(\bbh)$ that characterizes the FSO channel is continuous, $\bbh$ takes values in a dense set of the domain. Considering this observation together with the assumption that there exists a solution satisfying all constraints with strict inequality, the results of Theorem 1 in \cite{chamon2020functional} claim the strong duality gap $\mathbb{P} = \mathbb{D}$.
\end{proof}

Theorem \ref{theorem1} states that the optimization problem \eqref{eq_problem_resource} has the null duality gap $\mathbb{D}-\mathbb{P}=0$ even if it is non-convex, where the strict feasibility assumption is mild in practice. We can then solve \eqref{eq_problem_resource} by solving the unconstrained dual problem \eqref{eq_dualprob} alternatively without loss of optimality.

\subsection{Primal-Dual Update}

We propose the SDG algorithm based on the above analysis, which iteratively searches for the optimal dual variables $\bblambda^*$ starting from an initial iterate $\bblambda^0$, to derive the corresponding optimal resource allocation policy $\bbr^*(\bbh)$. To be more precise, the SDG consists of two steps over an iteration index $k$. The primal step updates the primal variables $\bbr(\bbh)$ given the current dual variables $\bblambda^k$, while the dual step updates the dual variables $\bblambda$ given the updated $\bbr^{k+1}(\bbh)$. Details are formally introduced below.

(1) \emph{Primal step.} At $k$-th iteration given the dual variables $\bblambda^{k}$ and the CSI $\bbh$, we update the primal variables by maximizing the Lagrangian as
 \begin{align} \label{eq_priup}
\bbr^{k+1}(\bbh)&= \argmax_{\bbr(\bbh) \in \mathcal{R}} \mathcal{L}\big(\bbr(\bbh),\bblambda^{k}\big)= \argmax_{\bbr(\bbh) \in \mathcal{R}}\mathbb{E}_\bbh \big[f(\bbh, \bbr(\bbh))\big] \!\!- \!\!\sum_{s=1}^S\! \lambda_s^k \mathbb{E}_\bbh \! \big[ c_s\big(\bbr(\bbh), f(\bbh, \bbr(\bbh))\big) \!\big]\nonumber\\
&= \argmax_{\bbr(\bbh) \in \mathcal{R}}f(\bbh, \bbr(\bbh)) \!- \!\sum_{s=1}^S\! \lambda_s^k c_s\left(\bbr(\bbh), f(\bbh, \bbr(\bbh))\right)
\end{align}
where the last equality is because the expectation is automatically maximized if it is maximized at each sample $\bbh$. In practice, \eqref{eq_priup} can usually be simplified based on specific system models. For example, in the RoFSO system, both the objective and the constraints separate the use of components $p_1(\bbh), \ldots, p_N(\bbh)$ in $\bbr(\bbh)$ and $h_1,\ldots,h_N$ in $\bbh$ with no coupling between them. In this context, solving \eqref{eq_priup} is equivalent to solving $N$ scalar sub-problems that update each component $p_i(\bbh)$ separately as $p_{i}^{k+1}(\bbh)= \argmax_{p_i(\bbh) \in [0,P_S]} \omega_i C_i(h_i, p_i(\bbh))-p_i(\bbh)$ for all $i=1,\ldots,N$.

%%%%%%%%%%%%%%%%%%%%%%%%%%%%%%%%%%%%%%%%%%%%%%%%%%%%%%%%%%%%%%%%
%%%%   A   L   G   O   R   I   T   H   M   %%%%%%%%%%%%%%%%%%%%%
%%%%%%%%%%%%%%%%%%%%%%%%%%%%%%%%%%%%%%%%%%%%%%%%%%%%%%%%%%%%%%%%
{\linespread{1.3}
\begin{algorithm}[t] \begin{algorithmic}[1]
\STATE \textbf{Input:} The objective function $f(\bbh, \bbr(\bbh))$, the constraints $\{ c_s\big(\bbr(\bbh), f(\bbh, \bbr(\bbh))\big)\}_{s=1}^S$, the CSI $\bbh$ and the initial dual variables $\bblambda^0$
\FOR [main loop]{$k = 0,1,2,\hdots$}
      \STATE Update the primal variables $\bbr^{k+1}(\bbh)$ by \eqref{eq_priup}
      \STATE $\bbr^{k+1}(\bbh)=\argmax_{\bbr(\bbh) \in \mathcal{R}}f(\bbh, \bbr(\bbh)) \!- \!\sum_{s=1}^S\! \lambda_s^k c_s\left(\bbr(\bbh), f(\bbh, \bbr(\bbh))\right)$
	\STATE Update the dual variables $\bblambda^{k+1}$ by \eqref{eq_dualup} \\
	\FOR [main loop]{$s = 1,\ldots,S$}
    	\STATE $\lambda_s^{k+1} =\left[ \lambda_s^{k}-\eta^k c_s\!\left(\bbr^{k+1}(\bbh), f(\bbh, \bbr^{k+1}(\bbh))\right) \right]_+ $
    	\ENDFOR
\ENDFOR

\end{algorithmic}
\caption{Stochastic Dual Gradient Algorithm}\label{alg:learning1} \end{algorithm}}

(2) \emph{Dual step.} Given the updated $\bbr^{k+1}(\bbh)$ from the primal step (1), we perform the dual gradient descent to update $\bblambda^{k}$ as
\begin{equation} \label{eq_dualup}
\begin{split}
\lambda^{k+1}_s &= \left[ \lambda_s^{k} - \eta^k \nabla_{\lambda_s} \ccalL(\bbr^{k+1}(\bbh),\bblambda^k) \right]_+ =\left[ \lambda_s^{k}-\eta^k c_s\!\left(\bbr^{k+1}(\bbh), f(\bbh, \bbr^{k+1}(\bbh))\right) \right]_+
\end{split}
\end{equation}
for all $s=1,\ldots,S$, where $\eta^k$ is the dual step-size at iteration $k$ and $[\cdot]_+ = \max(\cdot, 0)$ is due to the non-negativity of the dual variables $\bblambda$.

By repeating these two steps recursively, $\bblambda^k$ converges to the optimal values $\bblambda^*$ as $k$ increases \cite{bottou2012stochastic}. Due to the null duality gap, the optimal solution $\bbr^*(\bbh)$ can be obtained from $\bblambda^*$ as
 \begin{align} \label{eq:SDGoptimalCompute}
\bbr^{*}(\bbh)\!=\! \argmax_{\bbr(\bbh) \in \mathcal{R}}\!f(\bbh, \bbr(\bbh)) \!- \!\sum_{s=1}^S\! \lambda_s^* c_s \big(\bbr(\bbh), f(\bbh, \bbr(\bbh))\big).
\end{align}
Algorithm \ref{alg:learning1} summarizes the SDG algorithm.

With accurate system models, the SDG algorithm solves the problem \eqref{eq_problem_resource} perfectly in theory with no relaxation or approximation. However, there exist several problems with respect to its practical implementation. For one thing, in the primal step of the SDG, there is no closed-form solution of \eqref{eq_priup} to compute optimal $\bbr^{k+1}(\bbh)$. Similarly, even after the algorithm converges, real time execution of $\bbr^*(\bbh)$ needs to numerically solve \eqref{eq:SDGoptimalCompute}. Therefore, it may require certain computational complexity. For another, we recall the difficulty of obtaining accurate system models in FSO systems as stated in challenge (iv) of Section \ref{sec:problem}. The SDG algorithm heavily relies on system models, i.e., we need accurate knowledge of the objective functions $f(\bbh, \bbr(\bbh))$ and constraint functions $c_s \big(\bbr(\bbh),\! f(\bbh, \bbr(\bbh))\big)$ to perform algorithm. This may not be available given a FSO system that is new, unfamiliar or complex. Furthermore, existing models do not always capture the true physical performance in practice leading to inevitable model errors. These motivate the development of a low-complexity and model-free learning-based algorithm to solve resource allocation problems, as we introduce in the following section.

%!TEX root = mainOp.tex
%%%%%%%%%%%%%%%%%%%%%%%%%%%%%
%%% SECTION : Stability   %%%
%%%%%%%%%%%%%%%%%%%%%%%%%%%%%

\section{Primal-Dual Deep Learning Algorithm}\label{sec_pddl}

To handle above limitations, we develop the model-free Primal-Dual Deep Learning (PDDL) algorithm  based on the SDG algorithm. The implementation of the PDDL requires only observed values of the FSO system (e.g., the observed channel capacity and CSI) instead of mathematical system models. We begin by noticing the problem \eqref{eq_problem_resource} shares the same structure as the statistical learning problem. The latter inspires us to introduce a parametrization $\bbtheta\in \mathbb{R}^q$ to represent the resource allocation policy as $\bbr(\bbh) = \bbPhi(\bbh, \bbtheta)$. Substituting this representation into \eqref{eq_problem_resource} yields
\begin{alignat}{3} \label{learning_prob}
 \mathbb{P}_{\bbtheta}:= &  \max_{\bbtheta} \ && \mathbb{E}_\bbh \left[f(\bbh, \bbPhi(\bbh, \bbtheta))\right] ,             \\
        &  \st                    \ && \!\mathbb{E}_\bbh \! \left[ c_s\big(\bbPhi(\bbh, \bbtheta), f(\bbh, \bbPhi(\bbh, \bbtheta))\big) \right] \!\le\! 0~\forall~ s=1,..., S, ~~ \bbtheta \in \Theta    \nonumber %
\end{alignat}
where $\Theta$ is the parametrization set satisfying $\bbPhi(\bbh, \bbtheta) \in \ccalR$. Then, the goal becomes to learn the optimal function $\bbPhi^*(\bbh, \bbtheta^*)$ by finding the optimal parametrization $\bbtheta^*$  that maximizes the objective while satisfying prescribed constraints.

\subsection{Near-Universal Parametrization}

The parametrization in \eqref{learning_prob} inevitably introduces a loss of optimality since resource allocation functions are restricted to those adhered to the form of $\bbr(\bbh) = \bbPhi(\bbh, \bbtheta)$. For example, a linear parametrization $\bbPhi(\bbh, \bbtheta)= \bbtheta^\top \bbh$ can never represent any nonlinear resource allocation policy. A good choice of $\bbPhi(\bbh, \bbtheta)$ should provide an accurate approximation for almost all functions in $\ccalR$ by changing parameters $\bbtheta$, and thus can model the space of allowable resource allocation policies to guarantee the learning performance. To quantify such function representation ability, we define the near-universal parametrization as follows.
\begin{definition}[Near-universal parametrization]
For any $\epsilon \ge 0$, the parametrization $\bbPhi(\bbh, \bbtheta)$ is $\epsilon$-universal if for any $\bbr(\bbh) \in \ccalR$, there exists a set of parameters $\bbtheta \in \Theta$ such that
\begin{equation}
\mathbb{E}_\bbh \left[ \|\bbr(\bbh) - \bbPhi(\bbh, \bbtheta)\|_\infty \right] \le \epsilon.
\end{equation}
\end{definition} 
\noindent The universal property has been found in a number of learning architectures, e.g., radial basis function networks \cite{park1991universal}, reproducing kernel Hilbert spaces \cite{sriperumbudur2010relation} and deep neural networks \cite{hornik1991approximation}. 

Deep Neural Networks (DNNs) in particular are well-suited candidates that exhibit the universal function approximation ability and achieve successes in various practical problems. DNNs are information processing architectures consisting of multiple layers, each of which comprises linear operations and pointwise nonlinearities. Specifically, consider a DNN with $L$ layers. At layer $\ell$, we have the input feature $\bbx_{\ell-1} \in \mathbb{R}^{n_{\ell-1}}$ with $n_{\ell-1}$ the number of hidden units at layer $(\ell\!-\!1)$. This feature is processed by the linear operation $\bbPi_\ell \in \mathbb{R}^{n_{\ell}\times n_{\ell-1}}$ to obtain the higher-level feature $\bbu_{\ell} \in \mathbb{R}^{n_{\ell}}$. The latter is passed through a pointwise nonlinearity $\sigma(\cdot): \mathbb{R} \to \mathbb{R}$ to generate the output feature $\bbx_{\ell} = \sigma\big( \bbPi_\ell \bbx_{\ell-1} \big)$. The output at layer $\ell$ is again taken as the input at layer $(\ell+1)$, and the process repeats recursively until the final layer $L$---see Fig. \ref{fig.evRecMain}. In our case, the input of the DNN is the instantaneous CSI $\bbx_0=\bbh$ and the output is $\bbx_L = \bbPhi(\bbh, \bbtheta)$. The parametrization $\bbtheta \in \mathbb{R}^q$ are the weights of linear operations $\bbPi_1, \ldots, \bbPi_L$, where $q = \sum_{\ell=0}^{L-1} n_\ell n_{\ell+1}$ is determined by feature dimensions $n_0, \ldots, n_L$. Common examples for the nonlinearity are the absolute value, the ReLU, the sigmoid function, etc. We then verify its near-universal property as follows.

\begin{theorem}\cite[Theorem 2.2]{hornik1991approximation}\label{theorem2}
Let $m(\bbh)$ be the distribution of the channel state information $\bbh$ and $\ccalR$ be the considered set of measurable functions. For a DNN with arbitrarily large number of layers and arbitrarily large layer sizes, it is dense in probability in $\ccalR$, i.e., for any function $\bbr(\bbh) \in \ccalR$ and $\epsilon > 0$, there exists $L$, $\{ n_1,\ldots,n_L \}$ and $\bbtheta \in \mathbb{R}^q$ such that 
\begin{equation} \label{mainthm2}
\begin{split}
m\big( \{ \bbh: \| \bbPhi(\bbh, \bbtheta) - \bbr(\bbh) \|_\infty > \epsilon \} \big) < \epsilon.
\end{split}
\end{equation}
\end{theorem}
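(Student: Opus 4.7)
The plan is to establish this in three stages via the classical universal-approximation recipe: (i) reduce from a measurable target to a continuous one via Lusin, (ii) restrict from $\mathbb{R}^m$ to a compact set of near-full probability via tightness of $m$, and (iii) invoke the Hornik--Cybenko density theorem for feedforward networks on compact sets. Since the cited reference proves exactly this assertion, my task is mostly to organize these pieces so that the $m$-measure of the bad set lines up with the target $\epsilon$.

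First, since $m(\bbh)$ is a Borel probability measure on $\mathbb{R}^m$, it is tight: pick a compact $K \subset \mathbb{R}^m$ with $m(K) > 1 - \epsilon/3$. Next, apply Lusin's theorem coordinate-wise to $\bbr$ to obtain a continuous $\tilde{\bbr}: \mathbb{R}^m \to \mathbb{R}^n$ and a compact $E \subseteq K$ with $m(K\setminus E) < \epsilon/3$ on which $\bbr = \tilde{\bbr}$; after a smooth truncation if necessary, $\tilde{\bbr}$ is bounded continuous on $E$. The core step then invokes the single-hidden-layer density theorem: for a broad class of nonlinearities $\sigma(\cdot)$ (including the ReLU, sigmoid, and absolute value used in this paper), the class of DNN-realizable maps $\bbPhi(\cdot, \bbtheta)$ is dense in $C(E, \mathbb{R}^n)$ under the sup-norm. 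Hence there exist $L$, widths $\{n_1,\ldots,n_L\}$, and parameters $\bbtheta \in \mathbb{R}^q$ with $\sup_{\bbh \in E} \|\bbPhi(\bbh, \bbtheta) - \bbr(\bbh)\|_\infty \le \epsilon$. Consequently,
\begin{equation*}
\{\bbh : \|\bbPhi(\bbh, \bbtheta) - \bbr(\bbh)\|_\infty > \epsilon\} \;\subseteq\; (\mathbb{R}^m \setminus K) \cup (K \setminus E),
\end{equation*}
whose $m$-measure is at most $2\epsilon/3 < \epsilon$, which is the bound \eqref{mainthm2}.

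The main obstacle is less any single analytic step than cleanly matching the statement of the reference's density theorem to our setting. The classical Hornik theorem is usually phrased for scalar-valued continuous functions on compacta; extending it to $\mathbb{R}^n$-valued outputs is routine (stack $n$ scalar DNNs in parallel, or simply widen the final layer), and extending from uniform approximation on a compact set to density in probability against an arbitrary Borel distribution $m$ is precisely what the tightness and Lusin reductions achieve. A secondary subtlety is that the \emph{parametrization set} $\Theta$ implicit in the theorem must be large enough to host the weights supplied by Hornik's construction; this is handled by the quantifiers over $L$ and $\{n_\ell\}$ in the theorem statement, which allow the architecture itself to grow with $\epsilon$. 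Once these are clarified, only tolerance bookkeeping---the $\epsilon/3$ budget split used above---remains to conclude.
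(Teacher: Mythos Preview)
The paper does not supply its own proof of this theorem: the statement is attributed directly to \cite[Theorem 2.2]{hornik1991approximation} in the theorem header, and the surrounding text only discusses its consequences. Your proposal is therefore not competing with any argument in the paper; it is a correct and standard reconstruction of Hornik's route---tightness of $m$ to pass to a compact set, Lusin (plus Tietze extension) to replace the measurable target by a continuous one on a set of near-full measure, and then the uniform density of feedforward networks on compacta to approximate. The tolerance bookkeeping is fine (indeed you only spend $2\epsilon/3$ of your budget). Two cosmetic remarks: the ``smooth truncation'' aside is unnecessary since any continuous function on the compact $E$ is already bounded, and when applying Lusin coordinate-wise you should note that the $n$ exceptional sets are intersected, so each should be taken of measure at most $\epsilon/(3n)$; neither affects the validity of the argument.
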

Theorem \ref{theorem2} states that DNNs can approximate functions in the considered set with arbitrarily small error $\epsilon$ by increasing the number of layers $L$ and layer sizes $\{n_\ell\}_{\ell=1}^L$. Therefore, the parametrization loss $\mathbb{P} - \mathbb{P}_{\bbtheta}$ can be sufficiently small by learning with the DNN parametrization.

%%%%%%%%%%%%%%%%%%%%%%%%%%%%%%%%%%%%%%%%%%%%%%%%%%%%%%%%%%%%%%%%%%%%%%%
%%%   F   I   G   U   R   E   %%%%%%%%%%%%%%%%%%%%%%%%%%%%%%%%%%%%%%%%%%%%%%%%%%%%%%%%%
%%%%%%%%%%%%%%%%%%%%%%%%%%%%%%%%%%%%%%%%%%%%%%%%%%%%%%%%%%%%%%%%%%%%%%%
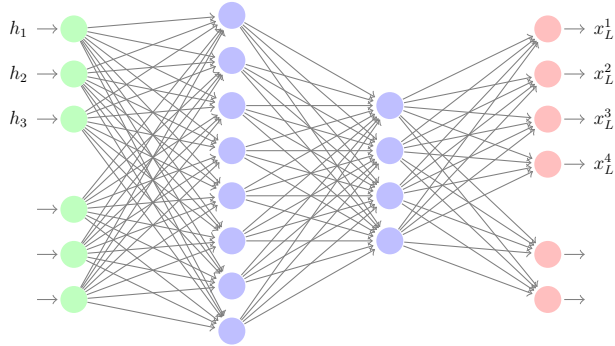
\begin{figure*}
\centering
\def\layersep{3.5cm}

\begin{tikzpicture}[shorten >=1pt,->,draw=black!50, node distance=.8*\layersep, scale=0.6, transform shape]
    \tikzstyle{every pin edge}=[<-,shorten <=1pt]
    \tikzstyle{neuron}=[circle,fill=black!25,minimum size=17pt,inner sep=0pt]
    \tikzstyle{tneuron}=[circle,line width=1.2pt,minimum size=17pt,inner sep=0pt]
    \tikzstyle{input neuron}=[neuron, fill=green!25];
    \tikzstyle{output neuron}=[neuron, fill=red!25];
    \tikzstyle{hidden neuron}=[tneuron, fill=blue!25];
    \tikzstyle{annot} = [text width=4em, text centered]

    % Draw the input layer nodes
  %  \foreach \name / \y in {1,...,6}
    % This is the same as writing \foreach \name / \y in {1/1,2/2,3/3,4/4}
        \node[input neuron, pin=left: $h_{1}$] (I-1) at (0,-1) {};
        \node[input neuron, pin=left: $h_{2}$] (I-2) at (0,-2) {};
        \node[input neuron, pin=left: $h_{3}$] (I-3) at (0,-3) {};
        
        \node[input neuron, pin=left:] (I-4) at (0,-5) {};
        \node[input neuron, pin=left:] (I-5) at (0,-6) {};
        \node[input neuron, pin=left:] (I-6) at (0,-7) {};
        
% Draw the hidden layer nodes
    \foreach \name / \y in {1,2,3,4,5,6,7,8}
        \path[yshift=0.3cm]
            node[hidden neuron] (H1-\name) at (\layersep,-\y cm) {};

    % Draw the hidden layer nodes
    \foreach \name / \y in {1,...,4}
        \path[yshift=0.3cm]
            node[hidden neuron] (H-\name) at (2*\layersep,-2 cm-\y cm) {};

    % Draw the output layer node
    	\node[output neuron,pin={[pin edge={->}]right:$x_L^1$}] (O1) at (3*\layersep,-1 cm) {};
	\node[output neuron,pin={[pin edge={->}]right:$x_L^2$}] (O2) at (3*\layersep,-2 cm) {};
	\node[output neuron,pin={[pin edge={->}]right:$x_L^3$}] (O3) at (3*\layersep,-3 cm) {};
	\node[output neuron,pin={[pin edge={->}]right:$x_L^4$}] (O4) at (3*\layersep,-4 cm) {};
	\node[output neuron,pin={[pin edge={->}]right:}] (O5) at (3*\layersep,-6 cm) {};
	\node[output neuron,pin={[pin edge={->}]right:}] (O6) at (3*\layersep,-7 cm) {};
	%\node[output neuron,pin={[pin edge={->}]right:$\sigma^i$}, right of=H-3] (O2) {};

    % Connect every node in the input layer with every node in the
    % hidden layer.
    \foreach \source in {1,...,6}
        \foreach \dest in {1,...,8}
            \path (I-\source) edge (H1-\dest);
            
     \foreach \source in {1,...,8}
        \foreach \dest in {1,...,4}
            \path (H1-\source) edge (H-\dest);

    % Connect every node in the hidden layer with the output layer
    \foreach \source in {1,...,4}
        \path (H-\source) edge (O1);
        
     \foreach \source in {1,...,4}
        \path (H-\source) edge (O2);
        
            \foreach \source in {1,...,4}
        \path (H-\source) edge (O3);
        
     \foreach \source in {1,...,4}
        \path (H-\source) edge (O4);
        
            \foreach \source in {1,...,4}
        \path (H-\source) edge (O5);
        
     \foreach \source in {1,...,4}
        \path (H-\source) edge (O6);

    % Annotate the layers
  %  \node[annot,above of=H-1, node distance=1cm] (hl) {Hidden layer};
  %  \node[annot,left of=hl] {Input layer};
   % \node[annot,right of=hl] {Output layer};
\end{tikzpicture}

% End of code
\caption{Deep neural network with $2$ hidden layers. The channel state information $\bbh$ is fed into the input units (green nodes), processed by the hidden units (blue nodes) and output in the last-layer units (red nodes).}
\label{fig.evRecMain}
\end{figure*}
%%%%%%%%%%%%%%%%%%%%%%%%%%%%%%%%%%%%%%%%%%%%%%%%%%%%%%%%%%%%%%%%%%%%%%%
%%%%%%%%%%%%%%%%%%%%%%%%%%%%%%%%%%%%%%%%%%%%%%%%%%%%%%%%%%%%%%%%%%%%%%%
%%%%%%%%%%%%%%%%%%%%%%%%%%%%%%%%%%%%%%%%%%%%%%%%%%%%%%%%%%%%%%%%%%%%%%%

\subsection{Primal-Dual Learning}

We now develop an analogous dual-domain learning method to find the optimal parametrization $\bbtheta^*$. Similar as the unparameterized problem, we begin by formulating the Lagrangian of \eqref{learning_prob} as
\begin{align} \label{eq_lagran1}
&\mathcal{L}(\bbtheta,\bblambda)= \mathbb{E}_\bbh\! \left[f(\bbh, \bbPhi(\bbh, \bbtheta))\right] \!-\!\! \sum_{s=1}^S\! \lambda_s \mathbb{E}_\bbh \big[ c_s\big(\bbPhi(\bbh, \!\bbtheta), f(\bbh,\bbPhi(\bbh,\! \bbtheta))\big) \big].
\end{align}
The corresponding dual problem is subsequently defined as
\begin{equation} \label{eq_dual1}
\begin{split}
\mathbb{D}_{\bbtheta} := \min_{\bblambda \ge 0} \mathcal{D_{\bbtheta}}(\bblambda)=\min_{\bblambda \ge \bb0} \max_{\bbtheta\in \Theta} \mathcal{L}(\bbtheta,\bblambda).
\end{split}
\end{equation}
For the above min-max problem with the parametrization $\bbtheta$, the duality gap $\mathbb{P}_{\bbtheta}-\mathbb{D}_{\bbtheta}$ can be assumed sufficiently small due to the strong duality in Theorem \ref{theorem1} and the near-universal property of the DNN in Theorem \ref{theorem2}. Thus, we can solve \eqref{learning_prob} by solving \eqref{eq_dual1} with little loss of optimality.

We similarly develop the PDDL algorithm for solving \eqref{eq_dual1}, which updates the primal variables $\bbtheta$ with gradient ascent and the dual variables $\bblambda$ with gradient descent at each iteration $k$:

(1) \emph{Primal step.} Given the dual variables $\bblambda^{k}$, we update the primal variables $\bbtheta$ as
\begin{align} \label{eq_priup1}
&\bbtheta^{k+1} \!=\! \bbtheta^{k} \!+\! \delta^k \nabla_{\bbtheta} \mathcal{L}(\bbtheta^{k},\bblambda^{k}) \!=\! \bbtheta^{k} \!+\! \delta^k \nabla_{\bbtheta} \mathbb{E}_\bbh\! \big[ f(\bbh,\! \bbPhi(\bbh, \!\bbtheta))\! -\!\! \sum_{s=1}^S\! \lambda_s c_s\big(\bbPhi(\bbh,\! \bbtheta),\! f(\bbh, \!\bbPhi(\bbh,\! \bbtheta))\!\big)\! \big]
\end{align}
where $\delta^k$ is the primal step-size, and the last equation is due to the linearity of the expectation.

(2) \emph{Dual step.} Given the updated $\bbtheta^{k+1}$ from step (1), the dual variables $\bblambda$ is updated as
\begin{equation} \label{eq_dualup1}
\begin{split}
\lambda^{k\!+\!1}_s \!=\!\Big[ \lambda_s^{k}\!-\!\eta^k \mathbb{E}_\bbh \big[ c_s\big(\bbPhi(\bbh,\! \bbtheta^{k+1}), f(\bbh, \bbPhi(\bbh, \bbtheta^{k+1}))\big)\big] \Big]_+
\end{split}
\end{equation}
for all $s=1,\!...\!,S$, where $\eta^k$ is the dual step-size.

The PDDL algorithm learns the optimal primal and dual variables $\bbtheta^*$ and $\bblambda^*$ by recursively repeating primal and dual steps. The primal-dual method used in the parameterized problem features a closed form update in \eqref{eq_priup1}, in contrast to the computationally expensive inner maximization required in \eqref{eq_priup} of the SDG algorithm used in the unparameterized problem. Even still, direct evaluation of the primal update in \eqref{eq_priup1} requires the knowledge of system models to compute the expected gradients, generally not available in practice. However, unlike the SDG algorithm, the PPDL algorithm is capable of leveraging the so-called policy gradient method to develop a completely model-free implementation.

%%%%%%%%%%%%%%%%%%%%%%%%%%%%%%%%%%%%%%%%%%%%%%%%%%%%%%%%%%%%%%%%
%%%%   A   L   G   O   R   I   T   H   M   %%%%%%%%%%%%%%%%%%%%%
%%%%%%%%%%%%%%%%%%%%%%%%%%%%%%%%%%%%%%%%%%%%%%%%%%%%%%%%%%%%%%%%
{\linespread{1.3}
\begin{algorithm}[t] \begin{algorithmic}[1]
\STATE \textbf{Input:} Initial primal and dual variables $\bbtheta^0, \bblambda^0$
\FOR [main loop]{$k = 0,1,2,\hdots$}
      \STATE Draw CSI samples $\{ \bbh_\tau \}_{\tau=1}^\ccalT$, and get corresponding allocated resources $\{ \bbr_\tau \}_{\tau=1}^\ccalT$ according to DNN outputs $\{\bbPhi(\bbh_\tau, \bbtheta^k)\}_{\tau=1}^\ccalT$ and policy distributions $\{\pi_{\bbh_\tau,\bbtheta^k}(\bbr)\}_{\tau=1}^\ccalT$
      \STATE Obtain observations of the objective function $\{ f( \bbh_\tau, \bbr_\tau)\}_{\tau=1}^\ccalT$ at current samples $\{ \bbh_\tau \}_{\tau=1}^\ccalT$
      \STATE Compute the policy gradient $\widetilde{ \nabla_{\bbtheta}} \mathcal{L}(\bbtheta^k, \bblambda^k)$ by \eqref{eq_polgrad}
      \STATE Update the primal variables $\bbtheta^{k+1} = \bbtheta^{k} + \delta^k \widetilde{ \nabla_{\bbtheta}} \mathcal{L}(\bbtheta^k,\bblambda^k) \nonumber $ [cf. \eqref{eq_priup1}]\\
	\STATE Update the dual variables $\bblambda^{k+1} \!\!=\!\! \Big[\! \bblambda^{k} \!-\! \frac{\eta^k }{\ccalT}\!\sum_{\tau \!=\! 1}^\ccalT\! c_s\!\big(\!\bbPhi(\bbh_\tau,\! \bbtheta^{k+1}\!), f(\bbh_\tau,\! \bbPhi(\bbh_\tau,\! \bbtheta^{k\!+\!1}\!)\!)\!\big)\!\big] \!\Big]_+ \nonumber $ [cf. \eqref{eq_dualup1}]
\ENDFOR

\end{algorithmic}
\caption{Primal-Dual Deep Learning Algorithm}\label{alg:learning} \end{algorithm}}

\subsection{Model-Free Policy Gradient}

Policy gradient has been developed as a practical gradient estimation method in reinforcement learning because it avoids explicit modeling of the objective function $f(\cdot)$ and the constraint functions $c_s(\cdot)$. It exploits a likelihood ratio property to compute the gradient for policy functions taking the form of $\mathbb{E}_\bbh [f(\bbh, \bbPhi(\bbh,\bbtheta))]$, where $f(\cdot)$ is unknown. Put simply, it provides a stochastic and model-free approximation for $\nabla_{\bbtheta} \mathbb{E}_\bbh [f(\bbh, \bbPhi(\bbh,\bbtheta))]$ \cite{sutton2000policy}.

In particular, we consider the policy parametrization $\bbPhi(\bbh,\bbtheta)$ as stochastic realizations drawn from a distribution with the delta density function $\pi_{\bbh,\bbtheta}(\bbr)=\delta(\bbr - \bbPhi(\bbh, \bbtheta))$. We can then rewrite the Jacobian of policy function as 
\begin{equation} \label{eq_grad}
\begin{split}
\nabla_{\bbtheta} \mathbb{E}_\bbh [f(\bbh, \bbPhi(\bbh,\bbtheta))] = \mathbb{E}_{\bbh,\bbr}[f(\bbh,\bbr) \nabla_{\bbtheta} \log \pi_{\bbh,\bbtheta}(\bbr)]
\end{split}
\end{equation}
where $\bbr$ is a random realization drawn from the distribution $\pi_{\bbh,\bbtheta}(\bbr)$. We now translate the computation of $\nabla_{\bbtheta} \mathbb{E}_\bbh [f(\bbh, \bbPhi(\bbh,\bbtheta))]$ to a function evaluation $f(\bbh,\bbr)$ multiplied with the gradient of the density function $ \nabla_{\bbtheta} \log \pi_{\bbh,\bbtheta}(\bbr)$. However, computing $ \nabla_{\bbtheta} \log \pi_{\bbh,\bbtheta}(\bbr)$ for a delta density function still requires the knowledge of $f(\cdot)$. We further address this issue by approximating the delta density function with a known density function centered around $\bbPhi(\bbh, \bbtheta)$, such as the Gaussian distribution, the Binomial distribution, etc. We can then estimate the gradient of policy function $\nabla_{\bbtheta} \mathbb{E}_\bbh [f(\bbh, \bbPhi(\bbh,\bbtheta))]$ by using \eqref{eq_grad}, which does not require the function model $f(\cdot)$ but rather the function value $f(\bbh,\bbr)$ at a sampled channel state $\bbh$ and the distribution $\pi_{\bbh,\bbtheta}(\bbr)$. To estimate the expectation $\mathbb{E}_{\bbh, \bbr}[\cdot]$, we observe $\ccalT$ samples of the CSI and take the average as
\begin{equation}
\begin{split} \label{eq_policy}
\!\!\!\widetilde{\nabla_{\bbtheta}} \mathbb{E}_\bbh [f(\bbh, \bbPhi(\bbh,\bbtheta))] \!=\! \frac{1}{\ccalT}\!\sum_{\tau = 1}^\ccalT\! f(\bbh_\tau,\bbr_\tau) \nabla_{\bbtheta} \log \pi_{\bbh_\tau,\bbtheta}(\bbr_\tau)
\end{split}
\end{equation}
where $\bbh_\tau$ is a sampled CSI and $\bbr_\tau$ is a realization drawn from the distribution $\pi_{\bbh_\tau,\bbtheta}(\bbr)$. With the use of \eqref{eq_policy}, we can compute the gradient of policy function in \eqref{eq_priup1} as
\begin{align} \label{eq_polgrad}
&\widetilde{\nabla_{\bbtheta}} \mathcal{L}(\bbtheta,\lambda) =\! \frac{1}{\ccalT}\!\sum_{\tau = 1}^\ccalT\! \Big\{ \Big[ f(\bbh_\tau,\! \bbr_\tau) \! - \!\sum_{s=1}^S\! \lambda_s c_s\big(\bbr_\tau,\! f(\bbh_\tau, \!\bbr_\tau) \Big] \nabla_{\bbtheta} \!\log \pi_{\bbh_\tau,\bbtheta}(\bbr_\tau) \!\Big\}.
\end{align}

We stress the model-free aspect of computing the gradient in \eqref{eq_polgrad}. That is, we need only observe the values $f(\bbh_\tau, \bbr_\tau)$ and $c_s(\bbr_\tau, f(\bbh_\tau, \bbr_\tau))$ as experienced in the FSO system under the instantaneous observed states $\bbh_\tau$ and $\bbr_\tau$. This is considered model-free because it does not require an explicit mathematical model of $f(\cdot)$ and $c_s(\cdot)$ or the CSI distribution model, as typically required to compute analytic gradients. In terms of the dual step, by estimating the expectation with the average of $\ccalT$ samples, it can also be computed with only observed values $c_s(\bbr_\tau, f(\bbh_\tau, \bbr_\tau))$. By replacing $\nabla_{\bbtheta} \mathcal{L}(\bbtheta^k,\lambda^k)$ with $\widetilde{ \nabla_{\bbtheta}} \mathcal{L}(\bbtheta^k,\lambda^k)$ in \eqref{eq_priup1}, the resulting PDDL algorithm is model-free and summarized in Algorithm \ref{alg:learning}. 

The PDDL algorithm learns the optimal resource allocation by updating the primal and dual variables without requiring any explicit knowledge of the objective function, the constraint functions or the CSI distribution, but only their observed values. Therefore, we can perform algorithm given any generic FSO systems and required constraints.

\begin{remark} \normalfont
The learning process of the PDDL algorithm outlined in Algorithm \ref{alg:learning} may take a number of iterations to update the DNN parameters before convergence. We stress, however, that the learning process is completed offline before the real-time implementation and the total training time thus does not matter. At runtime, the execution of the learned DNN $\bbPhi(\cdot, \bbtheta^*)$ on the instantaneous channel state information $\bbh$ requires little computational complexity, yielding an efficient implementation as validated in numerical experiments.
\end{remark}

%!TEX root = mainOp.tex
%%%%%%%%%%%%%%%%%%%%%%%%%%%%%%%
%%% SECTION : Simulations   %%%
%%%%%%%%%%%%%%%%%%%%%%%%%%%%%%%

\section{Numerical Experiments} \label{sec:sims}

%%%%%%%%%%%%%%%%%%%%%%%%%%%%%%%%%%%%%%%%%%%%%%%%%%%%%%%%%%%%%%%%%%%%%%%%%%%%%%%%%%%%%%%%%%%%%%%%%%%%%%%%%%%%%%%%%%%%%%%%%%%%%%%%%%%%%%%%%%%%%%%%%%%%%%%%%%%%%%%%%%%%%%%%%%%%%%%%%%%%%%%%%%%%%%%%%%%%%%%%%%%%%%%%%%%%%%%%%%%%%%%%%%%%%%%%%%%%%%%%%%%%%%%%%%%%%%%%%%%%%%%%%%%%%%%%%%%%%%%%%%%%

In this section, we corroborate theory by numerically analyzing the performance of the SDG and PDDL algorithms for a large set of resource allocation problems in FSO communications. To implement the algorithms, we consider a batch-size of $\ccalT = 64$ samples. The ADAM optimizer is used for the primal update and the exponentially decaying step-size is used for the dual update. In the PDDL algorithm, we address the feasibility condition $\bbr(\bbh) \in \ccalR$ or $\bbtheta \in \Theta$ by selecting suitable policy distributions $\pi_{\bbh,\bbtheta}$, as detailed in specific applications. Also note that though the PDDL algorithm is model-free, we make system observations (objective and constraint function observations) in numerical simulation using a given model; however, we do not assume knowledge of this model to implement the algorithm, only to generate samples to be observed.

\noindent\textbf{Channel state information.} We consider FSO channel effects comprising two components: the attenuation $h_a$ and the turbulence $h_t$. The attenuation $h_a$ represents the path loss induced by weather conditions as $h_a = A_t A_r e^{-\alpha d}/(d^2 \lambda^2)$ with $\alpha$ the attenuation coefficient depending on weather visibility, $d$ the transmission distance, $\lambda$ the wavelength, $A_t$ and $A_r$ the aperture areas of the transmitter and the receiver. The turbulence $h_t$ is modeled as the well-known log-normal distribution, which is commonly used under weak-to-moderate turbulence. Without loss of generality, other distributions (e.g., Gamma-Gamma distribution) are applicable based on turbulence conditions. We then characterize the FSO channel as $y = h_a h_t x +n$ with $x$ the transmitted signal, $y$ the received signal and $n$ the additive Gaussian noise.

\subsection{Power Adaptation}\label{exppower}

We first consider the power adaptation in the RoFSO system---see Section \ref{poweradaption}. The goal is to allocate powers to orthogonal optical carriers that maximize the weighted sum-capacity within total and peak power constraints
\begin{alignat}{3} \label{eq_problem1112}
 \mathbb{P}:= &  \max_{\bbr(\bbh)} \ && \sum_{i=1}^N \omega_i \mathbb{E}_\bbh\left[C_i(\bbh, \bbr(\bbh))\right] ,             \\
        &  \st                    \ && \mathbb{E}_\bbh\Big[\sum_{i=1}^N p_i(\bbh)\Big] - P_t\le 0,~ \ccalR &= [0,P_s]^N   \nonumber %
\end{alignat}
with $C_i(\bbh, \bbr(\bbh))$ the capacity of $i$-th wavelength channel [cf. \eqref{eq_capacity}], $\bbr(\bbh)=[p_1(\bbh),\ldots,p_N(\bbh)]^\top$ the allocated powers, $P_t$ and $P_s$ the total and peak power limitations. The problem is challenging due to the complicated non-convex objective and constraints.

\begin{figure*}%
\centering
\begin{subfigure}{0.33\columnwidth}
\includegraphics[width=1.0\linewidth, height = 0.7\linewidth]{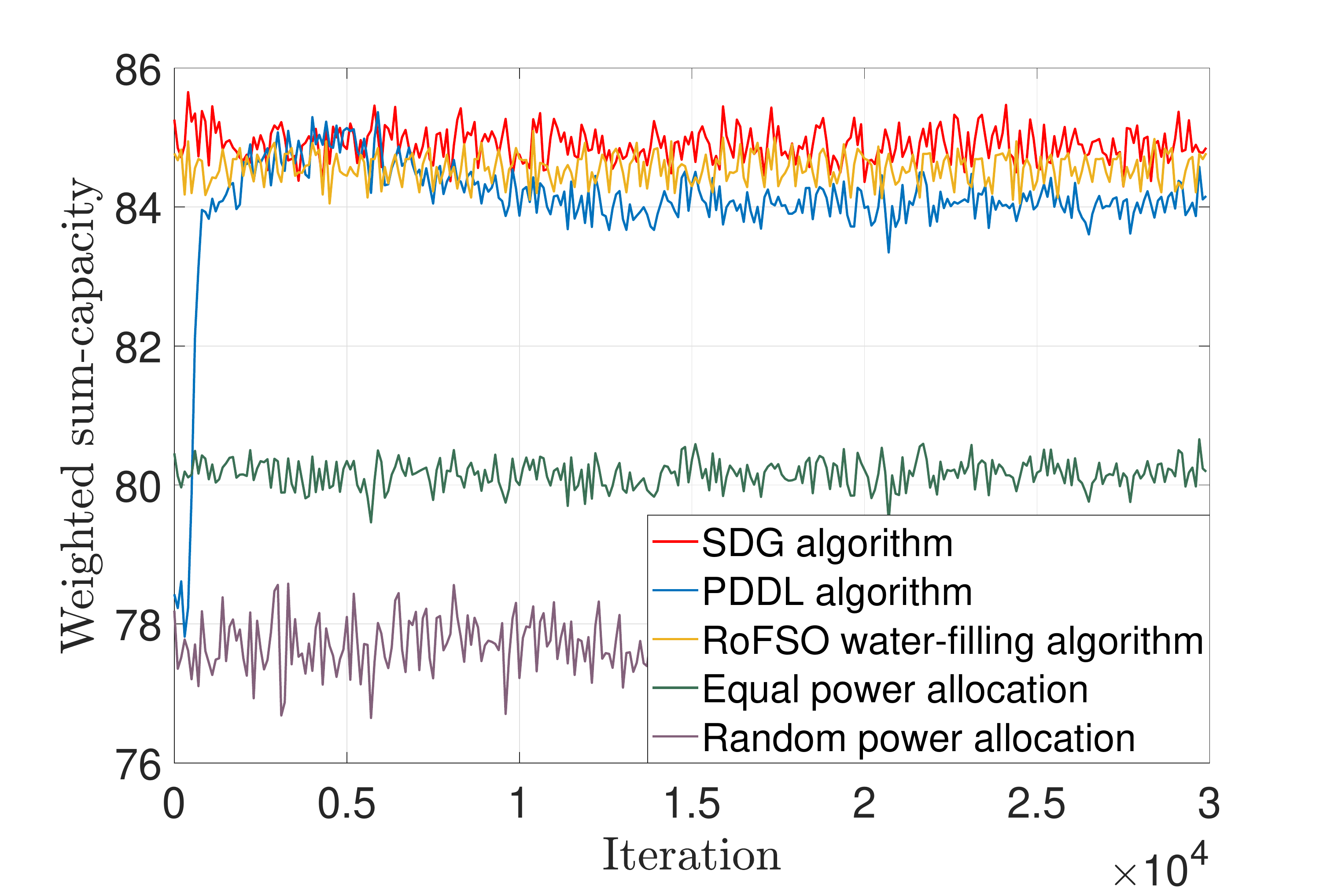}%
\caption{}%
\label{subfiga_obj_normal}%
\end{subfigure}\hfill\hfill%
\begin{subfigure}{0.33\columnwidth}
\includegraphics[width=1.0\linewidth,height = 0.7\linewidth]{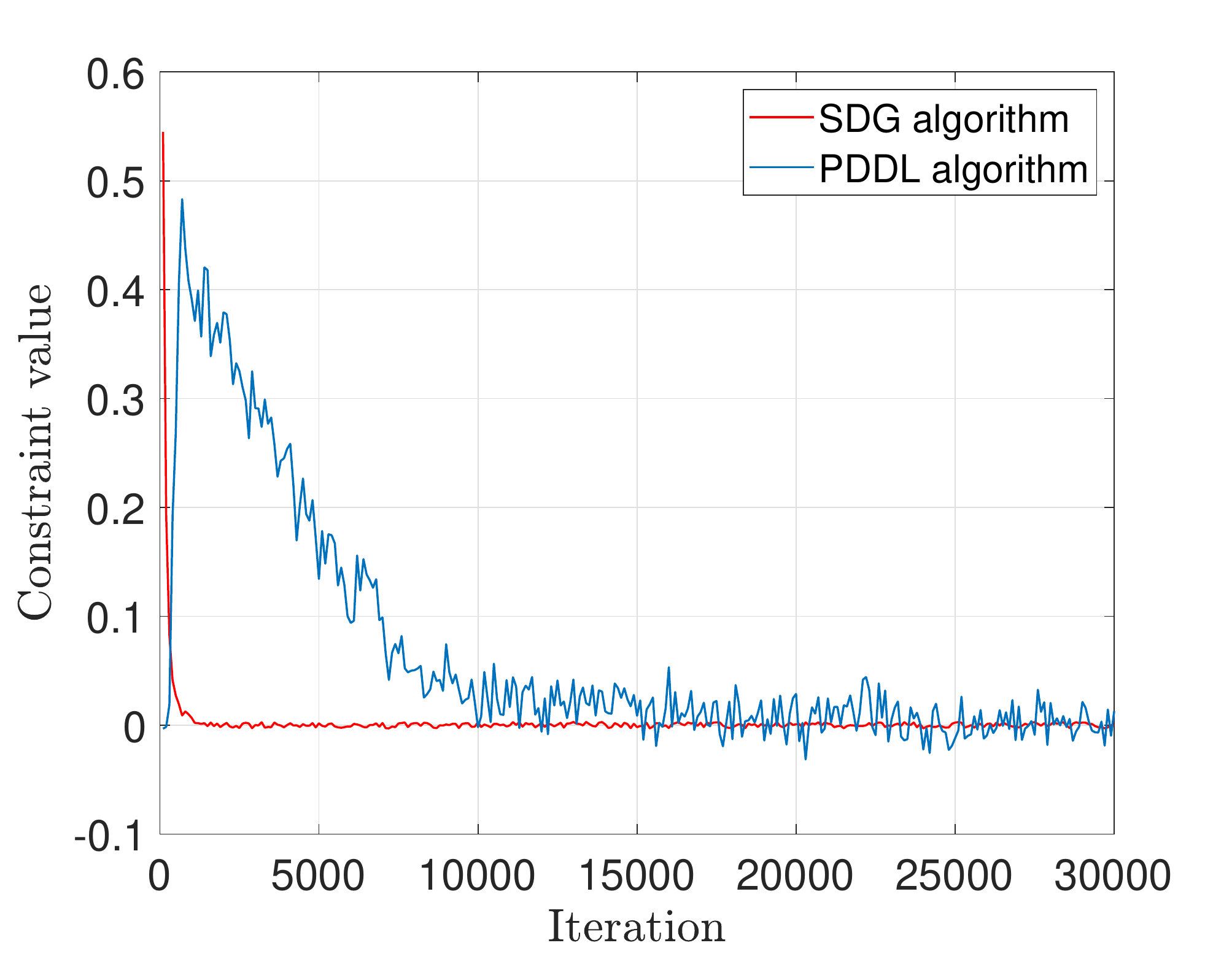}%
\caption{}%
\label{subfigb_con}%
\end{subfigure}\hfill\hfill%
\begin{subfigure}{0.33\columnwidth}
\includegraphics[width=1.0\linewidth,height = 0.7\linewidth]{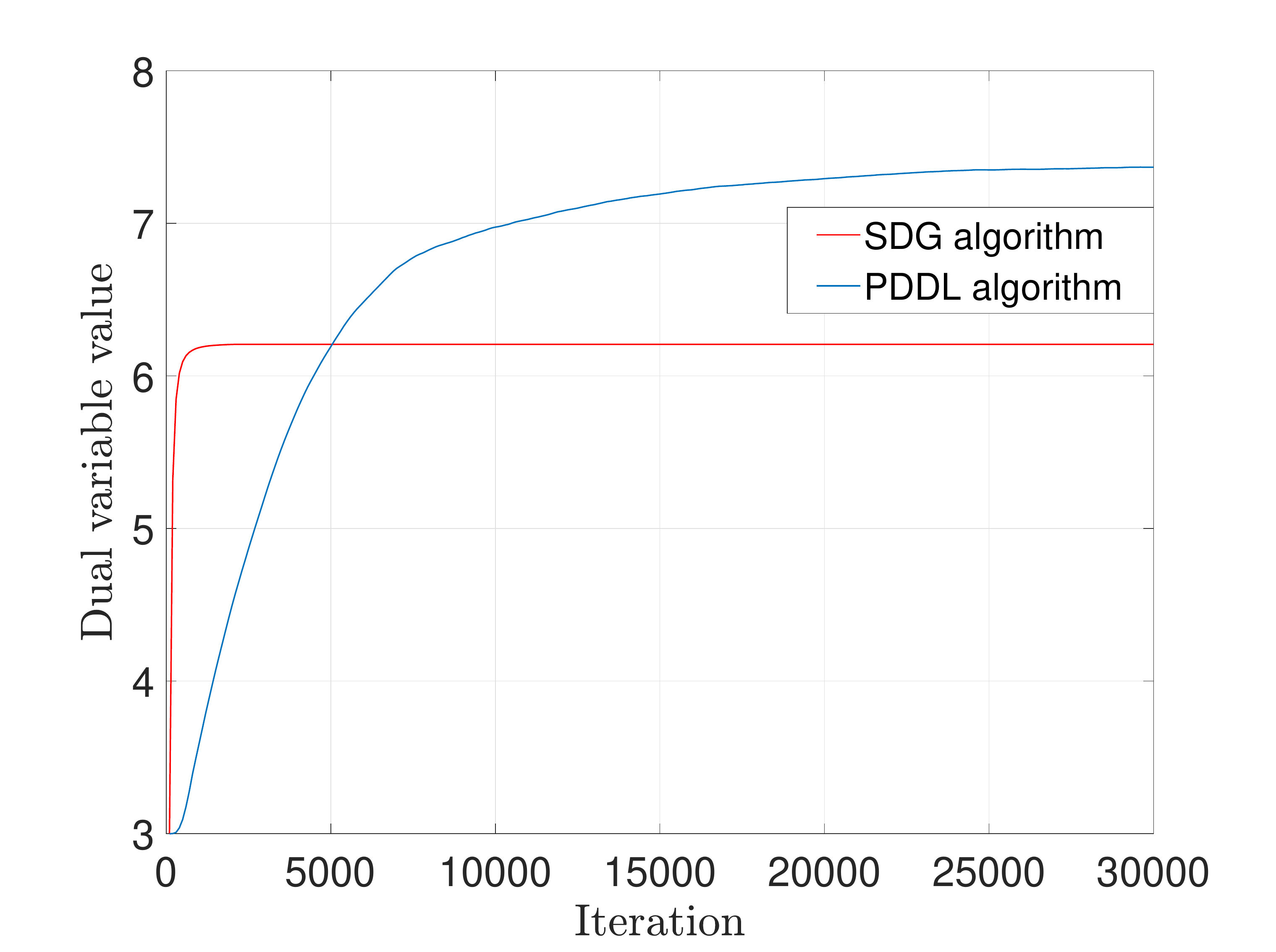}%
\caption{}%
\label{subfigc_lam}%
\end{subfigure}
\caption{Performance of the SDG, the PDDL and the baseline policies for power adaptation in the $10$ wavelength multiplexing RoFSO system. (a) The objective value. (b) The constraint value. (c) The dual variable value.}\label{fig_simple_power}\vspace{-5mm}
\end{figure*}

\begin{figure}%
\centering
\begin{subfigure}{0.33\columnwidth}\centering
\includegraphics[width=1\linewidth, height = 0.7\linewidth]{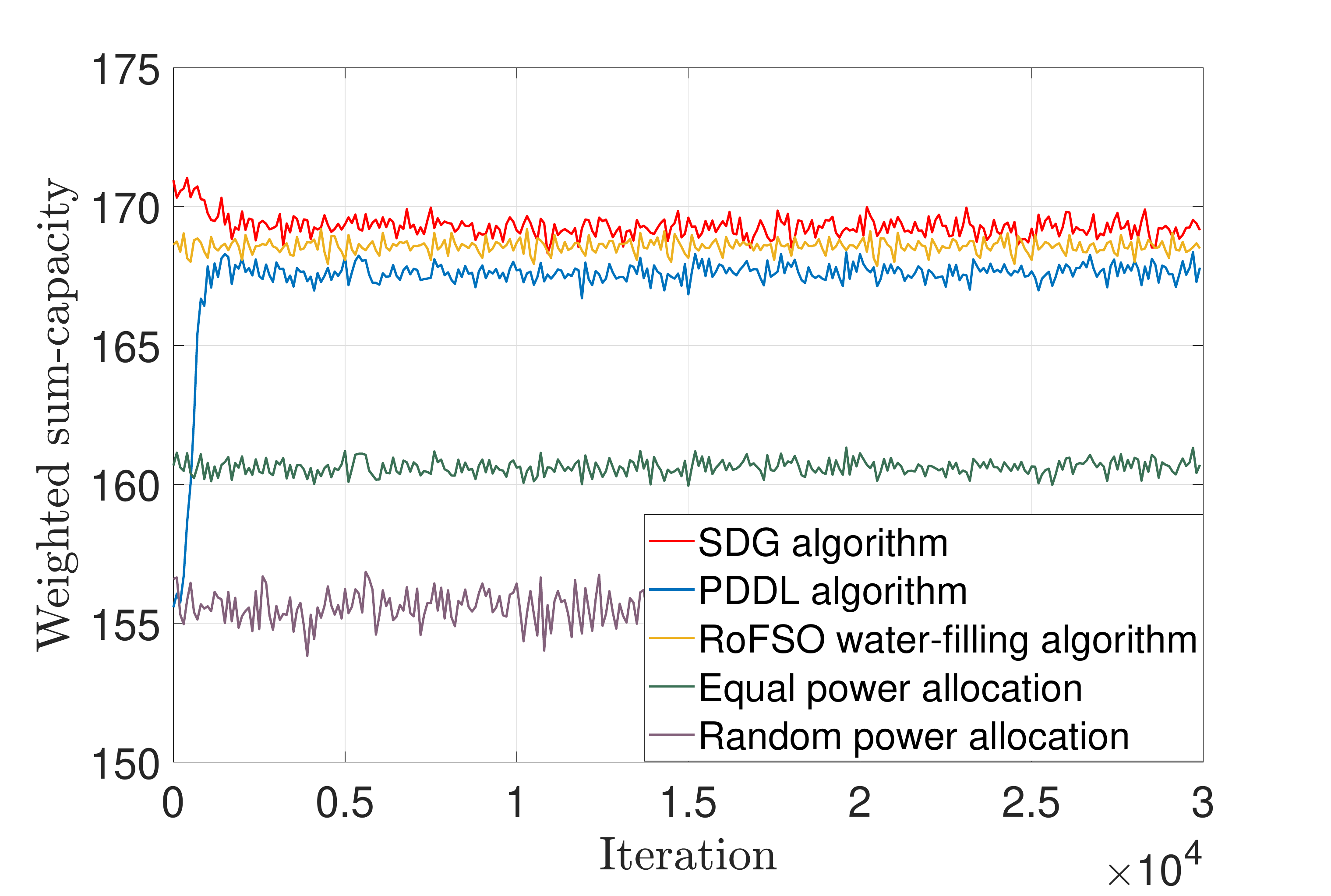}%
\caption{}%
\label{subfiga_largeobj}%
\end{subfigure}\hfill\hfill%
\begin{subfigure}{0.33\columnwidth}\centering
\includegraphics[width=1\linewidth, height = 0.7\linewidth]{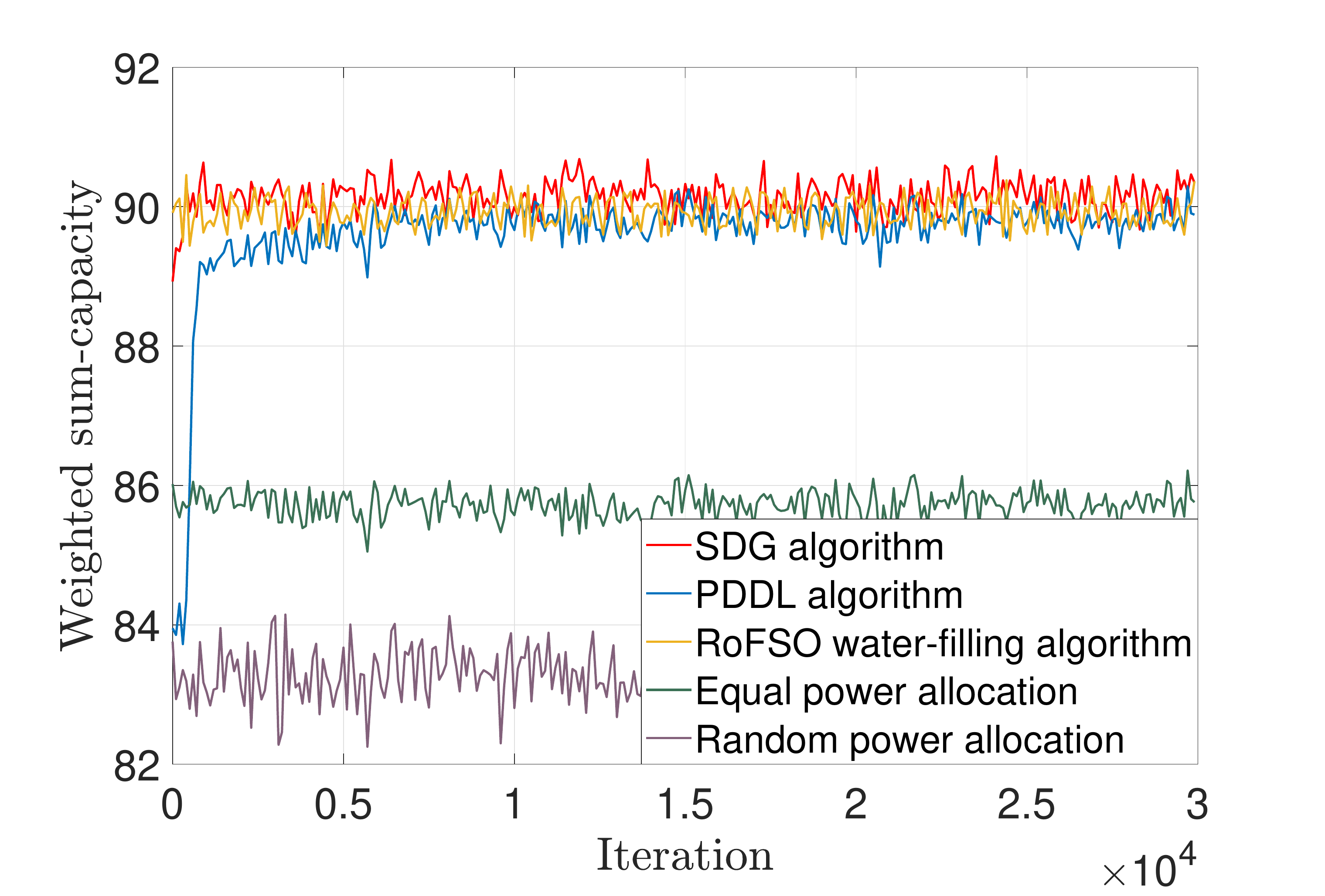}%
\caption{}%
\label{subfiga_moreobj}%
\end{subfigure}\hfill\hfill%
\begin{subfigure}{0.33\columnwidth}
\includegraphics[width=1.0\linewidth,height = 0.7\linewidth]{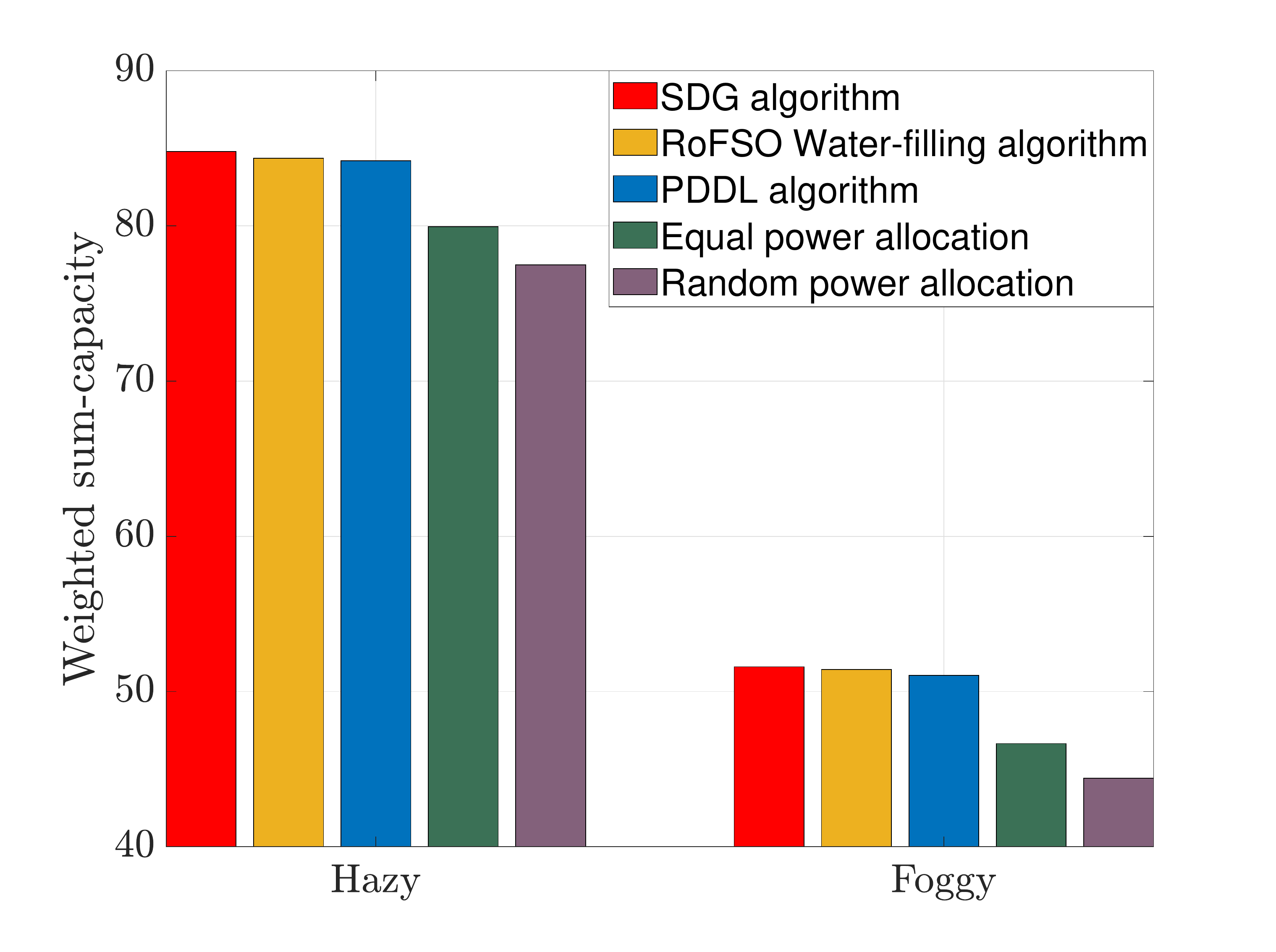}%
\caption{}%
\label{subfigd_lam}%
\end{subfigure}%
\caption{Performance of the SDG, the PDDL and the baseline policies for power adaptation in different RoFSO system configurations. (a) $20$ wavelength multiplexing with power limitations $P_t = 3W, P_s = 0.3W$. (b) $10$ wavelength multiplexing with power limitations $P_t = 3W, P_s = 0.6W$. (c) Hazy and light foggy weather conditions.}\label{fig_complex_power}\vspace{-5mm}
\end{figure}

The priority weights $\bbomega$ are drawn randomly in $[0,1]$ and system parameters are set as: $P_t = 1.5W$; $P_s = 0.3W$; $m_p=5$; $OMI=15\%$; $r=0.75$; $RIN=-140dB/Hz$; $T=300K$; transmitter aperture diameter $D_{tx}=0.015m$; receiver aperture diameter $D_{rx} = 0.05m$ and $d=1km$. We consider three baseline policies for comparison: (i) the RoFSO water-filling algorithm (ii) the average power allocation and (iii) the random power allocation. The first is the modified (improved) water-filling algorithm for the RoFSO system depending on system models to solve KKT conditions \cite{zhou2015optical}, while the second and the third are model-free. For the PDDL algorithm, we consider the policy distribution $\pi_{\bbh,\bbtheta}$ as a truncated Gaussian distribution to satisfy the feasibility condition $\bbr(\bbh)\in \ccalR = [0,P_s]^N$, i.e., the truncated Gaussian distribution has fixed support on $[0,P_s]$. Since there is no coupling or interference between wavelength channels, we construct $N$ independent DNNs serving for $N$ channels. The input of each DNN is the CSI on its associated channel, and the output $\bbPhi(\bbh,\bbtheta)\in \mathbb{R}^{2}$ is a set of mean and standard deviation that specify the truncated Gaussian distribution. The DNN is built with two hidden layers, each containing $20$ and $10$ units respectively, and the nonlinearity is the ReLU $\sigma(\cdot) = [\cdot]_{+}$.

Fig. \ref{fig_simple_power} shows results of a relatively small-scale experiment with $N=10$ wavelength channels. From Fig. \ref{subfiga_obj_normal}, we see that the SDG and the PDDL converge as the iteration increases. The SDG solves the problem exactly and thus exhibits the best performance than baseline policies. The PDDL outperforms significantly the other model-free policies and achieves close performance to that of the model-based SDG and water-filling algorithms, indicating its near-optimal performance without model knowledge. Fig. \ref{subfigb_con} shows that the constraint value converges to zero with the increase of iteration. This confirms the feasibility of solutions obtained by our algorithms. In Fig. \ref{subfigc_lam}, we observe that the dual variable learned by the PDDL converges closely to that of the SDG, with a small difference as predicted by the near-universality of DNNs. 

In Fig. \ref{fig_complex_power}, we run experiments under different system configurations; namely, different number of wavelength channels, different power budgets, and different weather conditions, to show the algorithm adaptability to changing scenarios. Fig. \ref{subfiga_largeobj} plots the objective in the RoFSO system with $N=20$ wavelength multiplexing, Fig. \ref{subfiga_moreobj} shows that with larger power budgets $P_t = 3W$ and $P_s=0.6W$, and Fig. \ref{subfigd_lam} compares the hazy (4.5dB/km path loss exponent) and light foggy (11.5dB/km path loss exponent) weather conditions. Similar results apply here, where the SDG outperforms baseline policies and the PDDL achieves near-optimal performance in a model-free manner. We also observe that performance improvements of the SDG and the PDDL compared to the model-free baseline policies become more visible in larger systems (Fig. \ref{subfiga_largeobj}) and worse weather conditions (Fig. \ref{subfigd_lam}), and the PDDL converges roughly to the same value as the SDG with larger power budgets (Fig. \ref{subfiga_moreobj}). The latter is because the increased budgets create more space for the PDDL to manipulate powers, such that the learning ability of DNNs is fully activated.
%%%%%%%%%%%%%%%%%%%%%%%%%%%%%%%%%%%%%%%%%%%%%%%%%%%%%%%%%%%%%%%%%%%%%%%%%%%%%%%%%%%%%%%%%%%%%%%%%%%%%%%%%%%%%%%%%%%%%%%%%%%%%%%%%%%%%%%%%%%%%%%%%%%%%%%%%%%%%%%%%%%%%%%%%%%%%%%%%%%%%%%%%%%%%%%%%%%%%%%%%%%%%%%%%%%%%%%%%%%%%%%%%%%%%%%%%%%%%%%%%%%%%%%%%%%%%%%%%%%%%%%%%%%%%%%%%%%%%%%%%%%%
\begin{table}[t] 
\begin{center}  
\caption{Implementation time required for the SDG, the PDDL and the water-filling algorithms in three cases. (a) $10$ wavelength multiplexing with $P_t\!=\!1.5$W, $P_s\!=\!0.3$W. (b) $10$ wavelength multiplexing with $P_t\!=\!3$W, $P_s \!=\! 0.6$W. (c) $20$ wavelength multiplexing with $P_t\!=\!3$W, $P_s \!=\! 0.3$W.}  
\label{table1}
\begin{tabular}{|l|l|l|l| p{2cm}|}  
\hline  
 & Case (a)  & Case (b) & Case (c) \\ \hline  
The SDG &  $2.81 \cdot 10^{-3}$s & $3.91 \cdot 10^{-3}$s & $7.19 \cdot 10^{-3}$s \\ \hline  
The PDDL &  $1.56 \cdot 10^{-5}$s & $1.52 \cdot 10^{-5}$s & $1.59\cdot 10^{-5}$s\\  \hline
The RoFSO water-filling & 1.65s & 1.71s & 3.31s\\ 
\hline  
\end{tabular}  
\end{center}  \vspace{-4mm}
\end{table}

Besides performance, the implementation time is of utmost importance for cooperative transmissions that allocate resources based on instantaneous CSI. Table \ref{table1} compares the implementation time of the SDG, the PDDL and the water-filling algorithms for processing an instantiation of CSI. We see that the PDDL requires far less time than the other algorithms but achieves comparable performance. This is because the computation of the DNN contains simply linear operations with pointwise nonlinearities, whereas the SDG requires some more computation expense for solving the inner maximization in \eqref{eq_priup}. The water-filling algorithm is particularly computationally expensive, requiring substantial time to solve the KKT conditions of the complicated objective with power constraints. The time saved by the PDDL and the SDG increases as the system becomes larger, highlighting a further advantage of our algorithms.

\subsection{Relay Selection}\label{exp:relay}

\begin{figure*}%
\centering
\begin{subfigure}{0.25\columnwidth}
\includegraphics[width=1.0\linewidth, height = 0.7\linewidth]{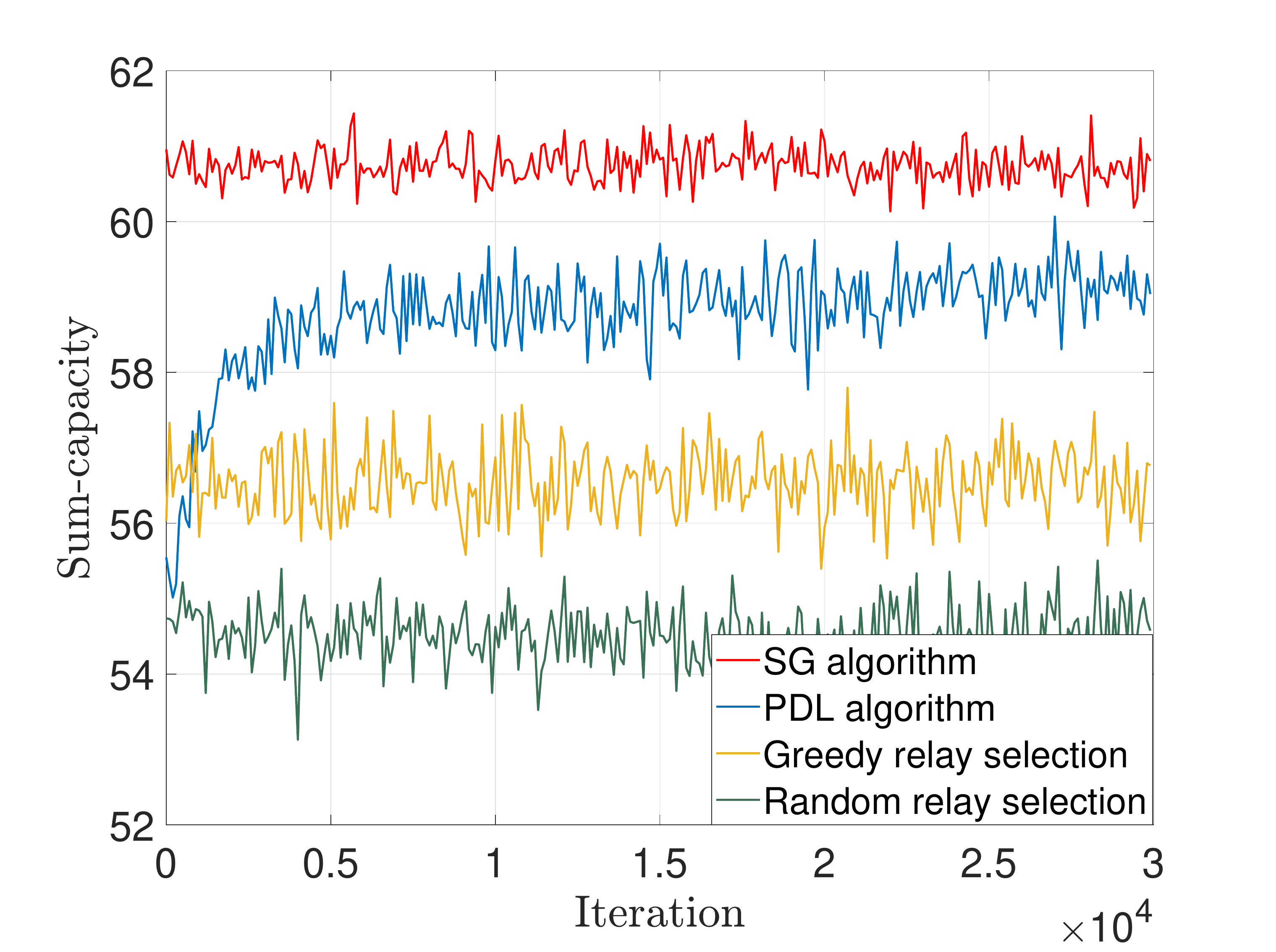}%
\caption{}%
\label{subfiga_relay}%
\end{subfigure}\hfill\hfill%
\begin{subfigure}{0.25\columnwidth}
\includegraphics[width=1.0\linewidth,height = 0.7\linewidth]{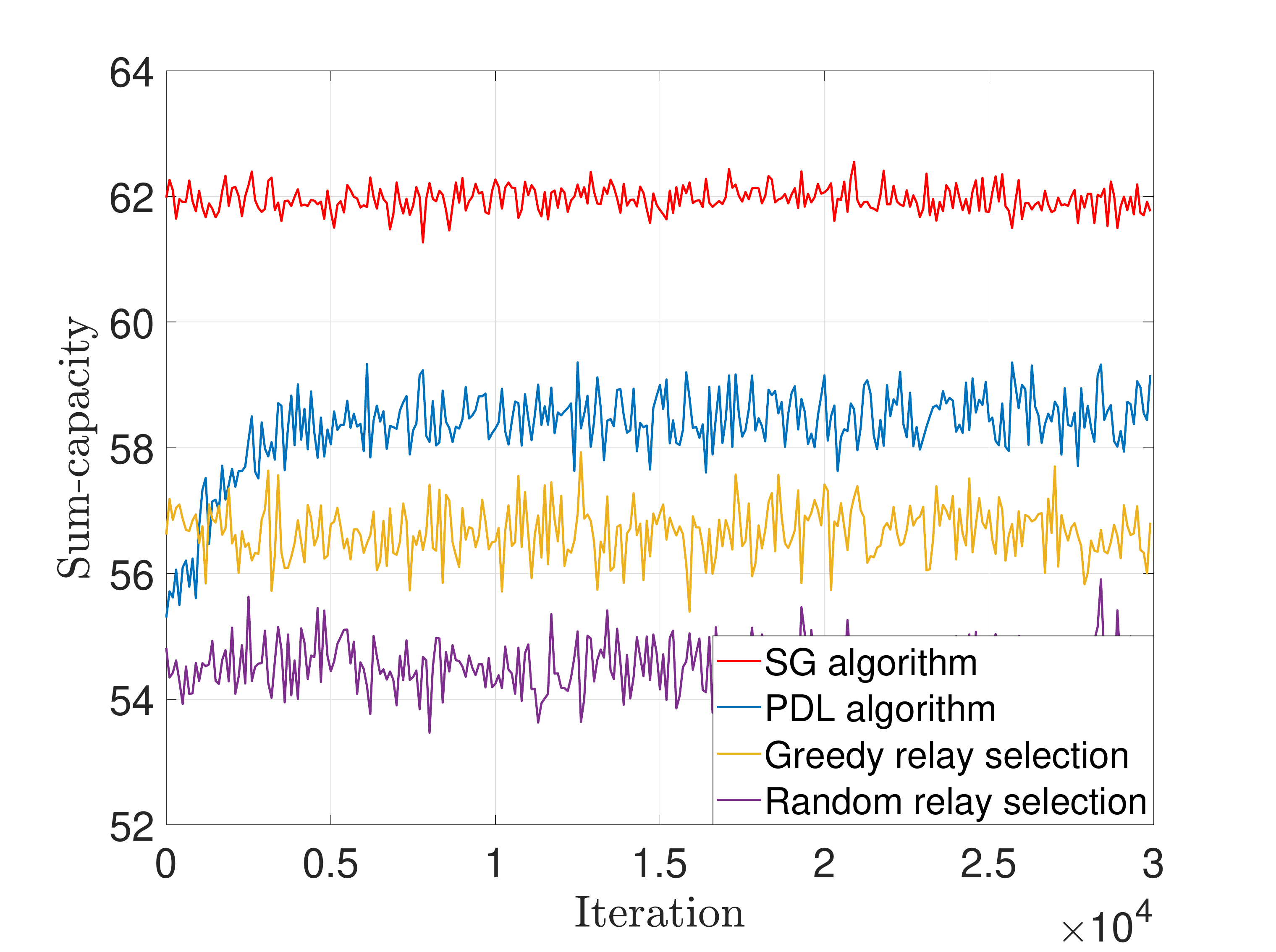}%
\caption{}%
\label{subfigb_relay}%
\end{subfigure}\hfill\hfill%
\begin{subfigure}{0.25\columnwidth}
\includegraphics[width=1.0\linewidth,height = 0.7\linewidth]{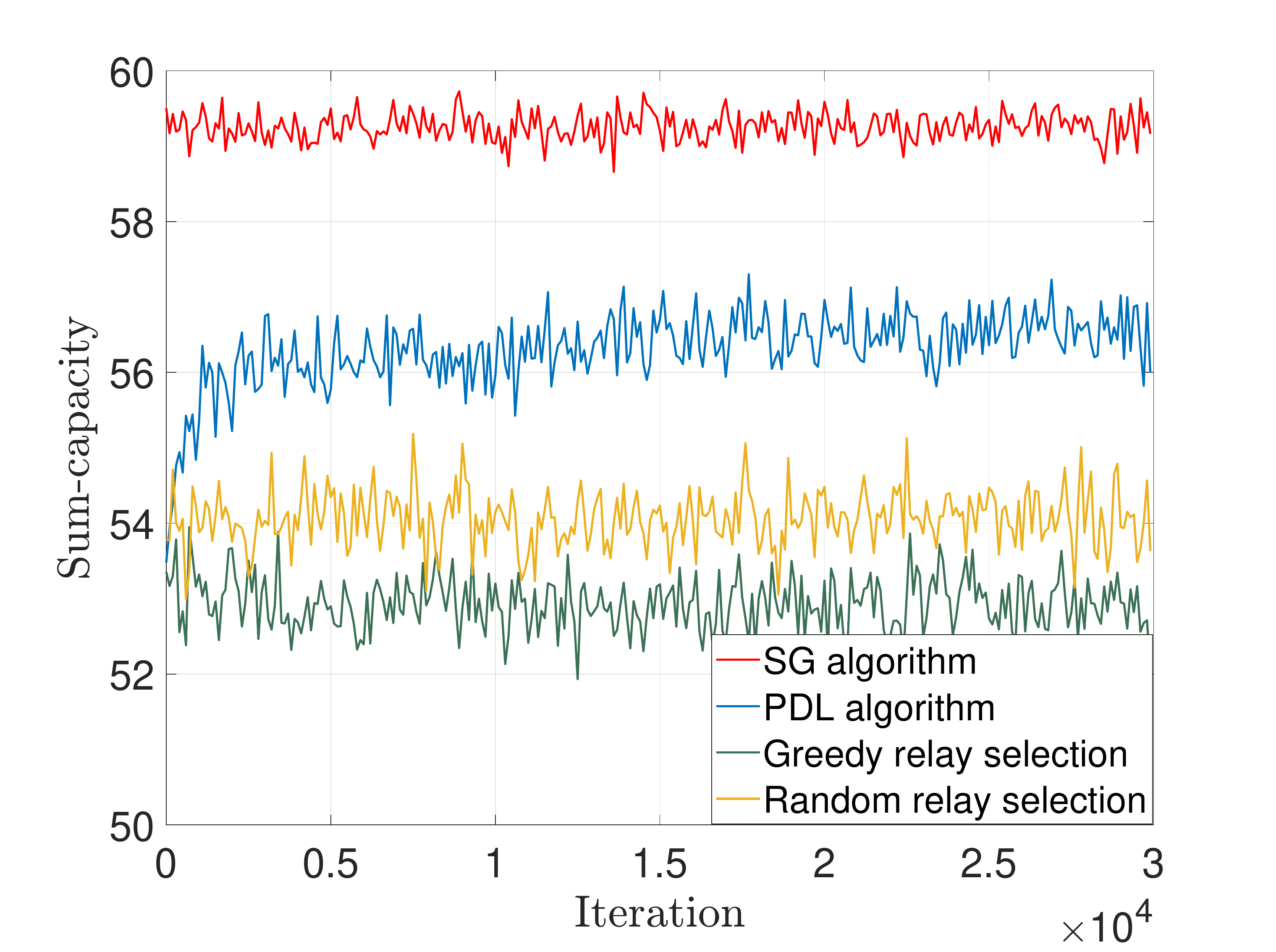}%
\caption{}%
\label{subfigc_relay}%
\end{subfigure}\hfill\hfill%
\begin{subfigure}{0.25\columnwidth}
\includegraphics[width=1.0\linewidth,height = 0.7\linewidth]{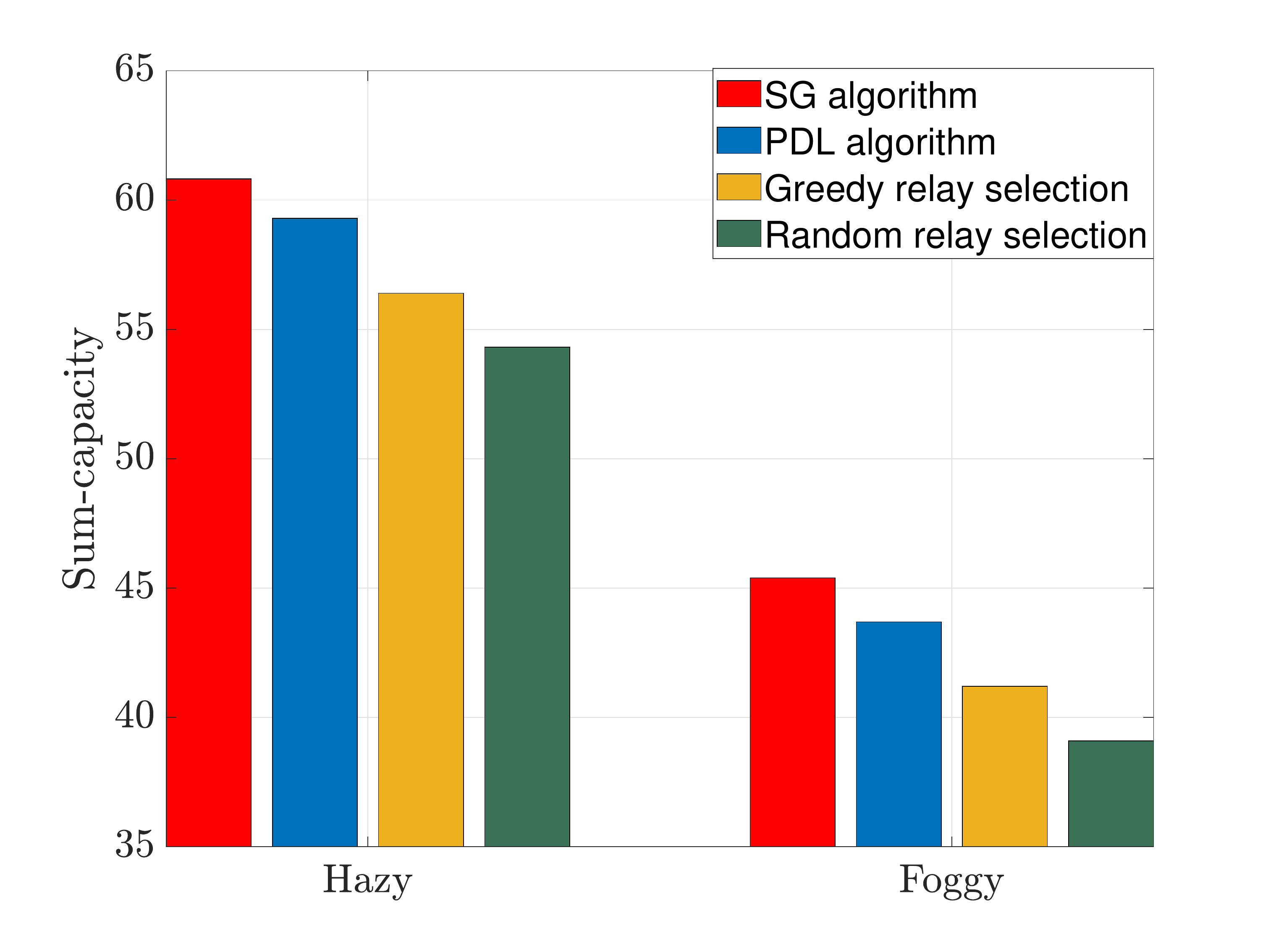}%
\caption{}%
\label{subfigd_relay}%
\end{subfigure}%
\caption{Performance of the SG, the PDL and the baseline policies in relay-assisted FSO networks. (a) The $2$-hop FSO network with $5$ parallel relays per hop. (b) The $2$-hop FSO network with $10$ parallel relays per hop. (c) The $3$-hop FSO network with $5$ parallel relays per hop. (d) Hazy and light foggy weather conditions.}\label{fig_relay}\vspace{-5mm}
\end{figure*}

We then consider the relay selection in relay-assisted FSO networks---see Section \ref{relayselection}. The goal is to select the appropriate relay at each hop to maximize the channel capacity
 \begin{alignat}{3} \label{eq_problem_relay}
 \mathbb{P}:= &  \max_{\bbr(\bbh)}\! \ && \mathbb{E}_\bbh \Big[ \sum_{j_N = 1}^M \!\cdots\! \sum_{j_1 = 1}^M \left(\prod_{i=1}^N \alpha_{i j_i}(\bbh) \right) C_{j_1 \ldots j_N} (\bbh)\Big],             \\
        &  \st           \!         \ && \!\ccalR \!=\! \Big\{\! \{0,1\}^{N\!\times\! M} | \sum_{j_i=1}^M \!\alpha_{ij_i}(\bbh) \!\leq\! 1,\forall~i\!=\!1,...,\!N \!\Big\}.  \nonumber  %
\end{alignat}
where $C_{j_1 \ldots j_N} (\bbh)$ is the channel capacity of the relaying link [cf. \eqref{eq_capacity35}], and $\bbr(\bbh) = [\bbalpha_1(\bbh), \ldots, \bbalpha_N(\bbh)]^\top \in \{ 0, 1 \}^{N \times M}$ are selected relays. Note that there is no stochastic constraint in problem \eqref{eq_problem_relay}, i.e., there is no constraint taking the form of $\mathbb{E}_\bbh\! \left[c_s\big(\bbr(\bbh), f(\bbh, \bbr(\bbh))\big)\right]\le 0$, in which case the dual update is not required. The SDG algorithm reduces to the Stochastic Gradient (SG) algorithm and the PDDL algorithm reduces to the Primal Deep Learning (PDL) algorithm, respectively.

The system parameters are set as $B=5\times 10^8 Hz$, $T_f = 10^{-8}s$, $\epsilon=1$, $P=0.3W$ and $R=0.75 A/W$. We consider two baseline policies for performance comparison: (i) the greedy policy that selects the relay with lowest CSI at each hop and (ii) the random relay selection, where both are model-free. For the PDL algorithm, we select the policy distribution $\pi_{\bbh,\bbtheta}$ as the categorical distribution since the allocated resources are binary variables $\bbr(\bbh) \in \ccalR $. The categorical distribution describes a random variable that takes on one of $M$ possible categories. We construct a single two-layered DNN of $200$ and $100$ hidden units and the nonlinearity is the ReLU. Channel conditions over the network $\bbh$ are given as inputs to the DNN, which outputs $\bbPhi(\bbh, \bbtheta) \in \mathbb{R}^{NM}$ that specify the selected probabilities of each relay (category) at each hop.

\begin{table}[t] 
\begin{center}  
\caption{Implementation time required for the SG, the PDL and the greedy relay selection in three cases. (a) $2$-hop FSO network with $5$ parallel relays per hop. (b) $2$-hop FSO network with $10$ parallel relays per hop. (c) $3$-hop FSO network with $5$ parallel relays per hop.}  
\label{table2}
\begin{tabular}{|l|l|l|l| p{2cm}|}  
\hline  
 & Case (a)  & Case (b) & Case (c) \\ \hline  
The SG &  $1.40\cdot 10^{-4}$s & $5.77\cdot 10^{-4}$s & $9.81\cdot 10^{-4}$s \\ \hline  
The PDL &  $1.55\cdot 10^{-5}$s & $3.09 \cdot 10^{-5}$s & $3.11 \cdot 10^{-5}$s\\  \hline
The greedy & $1.49 \cdot 10^{-5}$s & $1.56 \cdot 10^{-5}$s & $1.55 \cdot 10^{-5}$s\\ 
\hline  
\end{tabular}  
\end{center}  \vspace{-4mm}
\end{table}

Fig. \ref{subfiga_relay} exhibits the performance of the SG, the PDL and two baseline policies in a $2$-hop network with $N=5$ parallel relays per hop. We see that both the SG and the PDL converge and outperform the baseline policies. The SG performs best on the premise that system models are available at hand. The PDL follows closely with a similar objective value, which is obtained without explicit model information. The constraints of relay selection are automatically satisfied by using the categorical distribution, confirming the feasibility of obtained solutions. Additional experiments in alternative scenarios are performed, i.e., a $2$-hop network with $N=10$ relays per hop in Fig. \ref{subfigb_relay}, a $3$-hop network with $N=5$ relays per hop in Fig. \ref{subfigc_relay}, and the hazy and light foggy weather conditions in Fig. \ref{subfigd_relay}. We observe similar results indicating the adaptivity of both algorithms to larger FSO networks and different weather conditions. The PDL gets slightly degraded in Fig. \ref{subfigb_relay} and \ref{subfigc_relay} because the problem becomes more difficult as we enlarge the system with more relays or more hops, while the DNN remains same with unchanged representational power. 

Table \ref{table2} shows the implementation time of the SG, the PDL and the greedy algorithms. Though the SG exhibits the best performance, its implementation takes the most time. The PDL achieves the close performance to the SG but only requires a comparable time as the greedy policy, achieving a favorable balance between these two factors.

\subsection{Joint Power and Relay Allocation} \label{Subsec:joint}

We now consider the joint power and relay allocation in two applications, which are more complicated but also of more interests in practice.

\textbf{Relay-assisted multichannel FSO network.} For the first experiment, we consider the relay-assisted multichannel FSO network where the system transmits signals with $L$ orthogonal optical carriers through $N$ intermediate hops \cite{hassan2016statistical}. In particular, the transmitter modulates signals onto multiple optical carriers and sends them simultaneously to the selected relay. The latter aggregates received signals, modulates orthogonal carriers, and transmits to the selected relay at next hop until the receiver. We assume there is no crosstalk between orthogonal carriers and each hop contains $M$ parallel relays for selection. Based on the CSI, different relays are selected at different hops and different powers are assigned to different carriers at the transmitter and selected relays to maximize the total channel capacity. Let $\bbh$ be the CSI between the transmitter, relays and the receiver, and $\bbr(\bbh)=\{ \bbp_{ij}(\bbh), \alpha_{ij}(\bbh)\}_{i=0,...,N,j=1,...,M}$ the allocated resources including assigned powers and selected relays. In particular, $\bbp_{ij}(\bbh) = [p_{ij}^1(\bbh),\ldots,p_{ij}^L(\bbh)]^\top \in \mathbb{R}^L$ are powers of $L$ optical carriers at $j$-th relay of $i$-th hop where $i=0, j=1$ and $i=N+1, j=1$ represent the transmitter and the receiver, and $\alpha_{ij} \in \{ 0,1 \}$ indicates whether $j$-th relay is selected at $i$-th hop. The channel capacity of $\ell$-th orthogonal channel over a specific selected relaying link is
\begin{align} \label{eq_capacity35}
&C_{j_1 \ldots j_N}^\ell (\bbh) =\frac{T_f B}{\epsilon} \log \!\Big( \!1\!+\! \Big( \prod_{i=0}^N \Big( 1+\frac{1}{p_{ij_i}^\ell(\bbh) h^{\ell}_{j_{i}j_{i+1}} \frac{R}{e \Delta f}} \Big)-1 \Big)^{-1} \Big)
\end{align}
where we assume $j_i$-th relay is selected at $i$-th hop and $h^\ell_{j_{i}j_{i+1}}$ is the CSI of $\ell$-th optical carrier between $j_{i}$-th relay at $i$-th hop and $j_{i+1}$-th relay at $(i+1)$-th hop. Since there is single transmitter and single receiver, we have $j_0=j_{N+1}=1$ by default. There are three types of constraints: the total power limitation $P_t$ at the transmitter and selected relays, the peak power limitation $P_s$ for each carrier, and that only one relay is selected at each hop. The optimization problem is
 \begin{alignat}{3} \label{eq_problem_powerrelay}
 &\mathbb{P}\!:=\! &&  \max_{\bbr(\bbh)}\! \ \mathbb{E}_\bbh\!\Big[\! \sum_{j_N \!=\! 1}^M \!\!\cdots\!\! \sum_{j_1 \!=\! 1}^M \!\Big(\prod_{i=1}^N \!\alpha_{i j_i}(\bbh)\! \Big)\! \sum_{\ell=1}^L \!\omega_\ell C_{j_1 \ldots j_N}^\ell (\bbh)\Big]\! ,              \\
        &  \st                    \ && \!\mathbb{E}_\bbh\!\Big[\!\sum_{\ell=1}^L p_{ij_i}^\ell(\bbh)\!\Big] \!-\! P_t \!\le\! 0,~i\! =\!1,\!\ldots\!,\!N, j_i\!=\!1,\!\ldots\!,\!M,\nonumber \\
        &  &&  \!\ccalR\! =\! \Big\{\! [0,\!P_s]^{(1+N\times M)\times L} \!\times\! \{ 0,1 \}^{N\!\times\! M} | \!\sum_{j_i=1}^M\! \alpha_{ij_i}(\bbh) \leq\! 1,i \!=\!1,\!...\!,\!N \Big\}\! \nonumber %
\end{alignat}
where $\bbomega = [\omega_1,\ldots,\omega_L]^\top$ represent priorities of different optical carriers. This challenging problem can be considered as the extension of the problem in Section \ref{relayselection} to the scenario with orthogonal optical carriers.

\begin{figure*}%
\centering
\begin{subfigure}{0.25\columnwidth}
\includegraphics[width=1.0\linewidth, height = 0.7\linewidth]{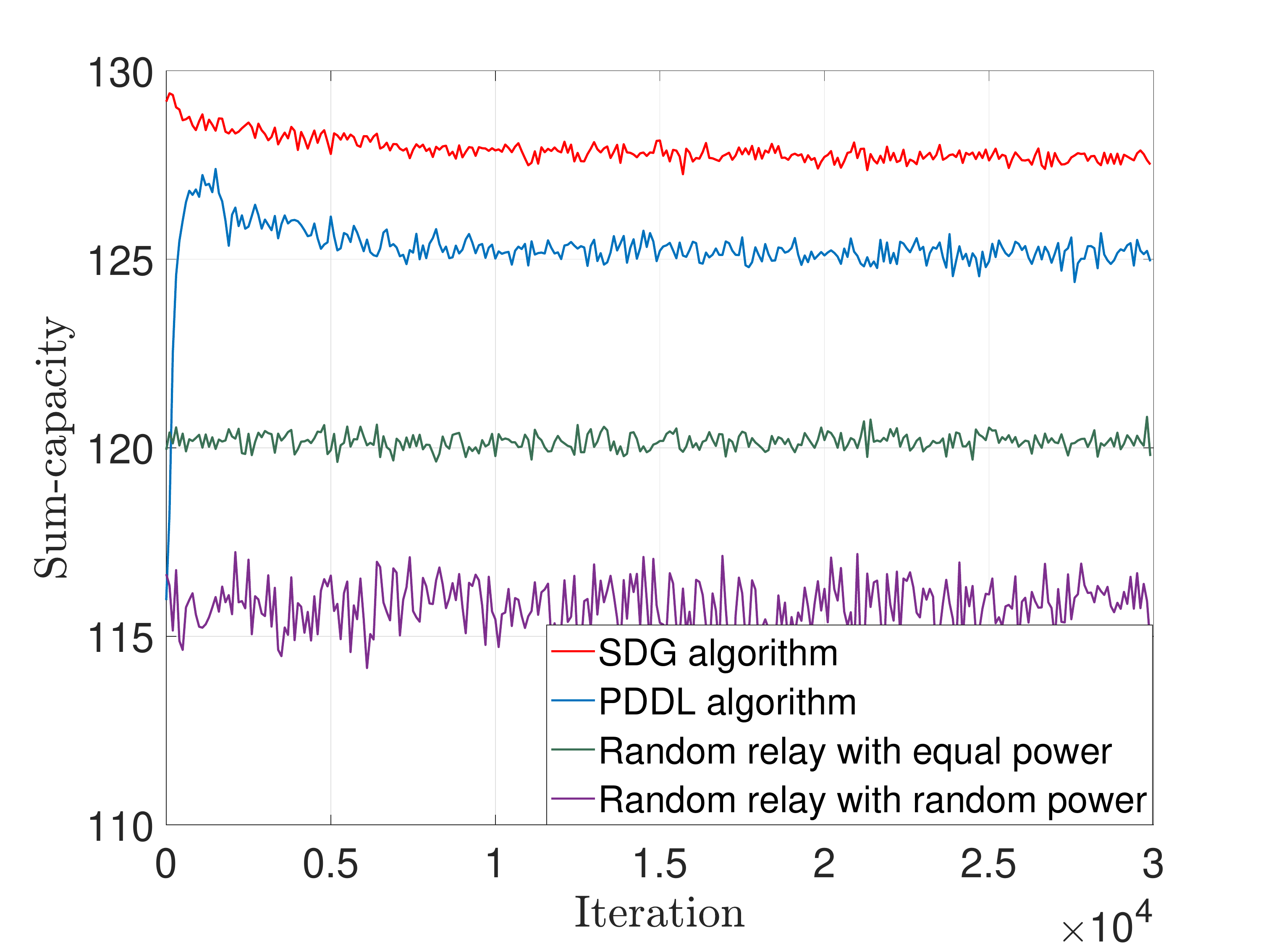}%
\caption{}%
\label{subfiga_pr}%
\end{subfigure}\hfill\hfill%
\begin{subfigure}{0.25\columnwidth}
\includegraphics[width=1.0\linewidth,height = 0.7\linewidth]{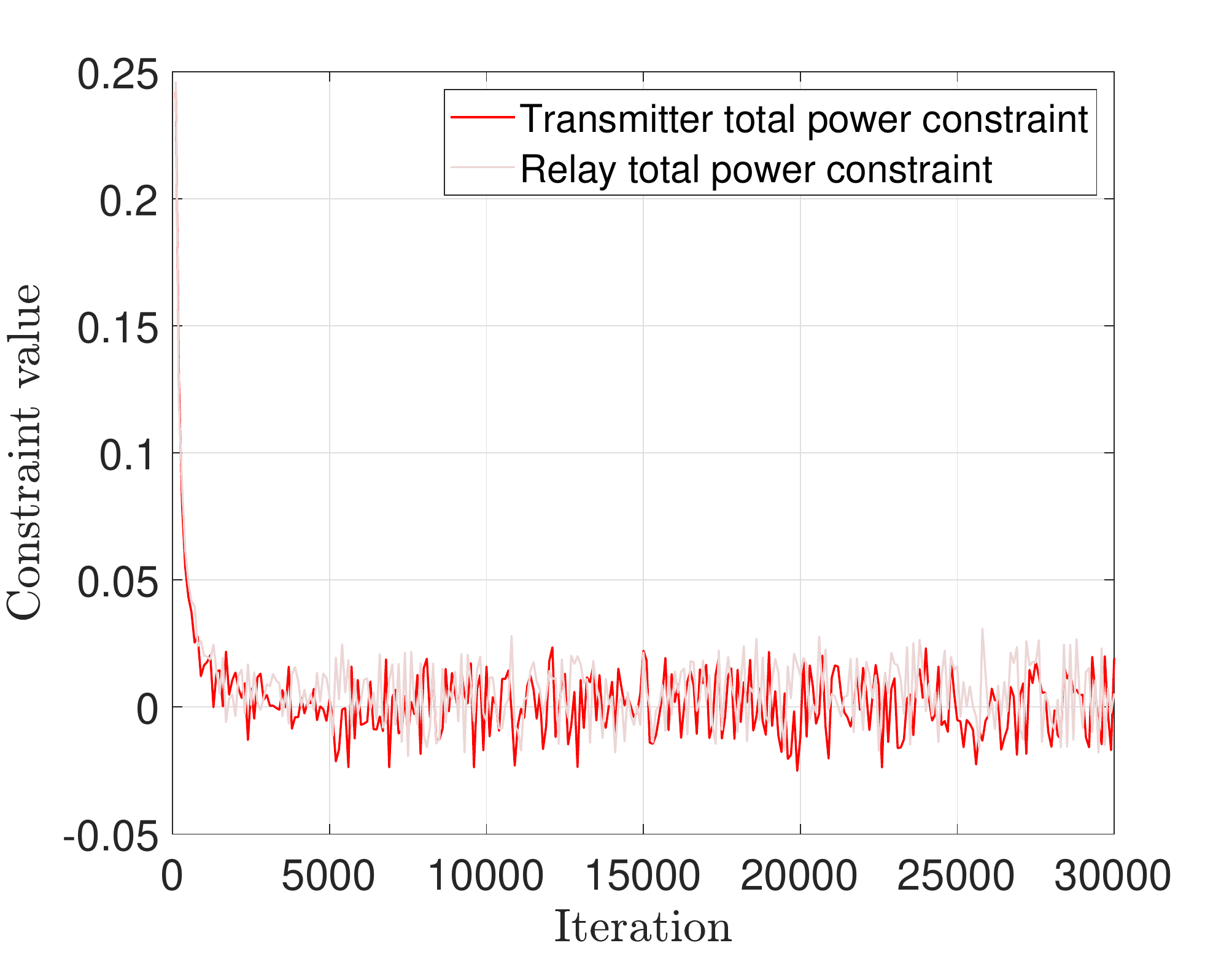}%
\caption{}%
\label{subfigb_pr}%
\end{subfigure}\hfill\hfill%
\begin{subfigure}{0.25\columnwidth}
\includegraphics[width=1.0\linewidth,height = 0.7\linewidth]{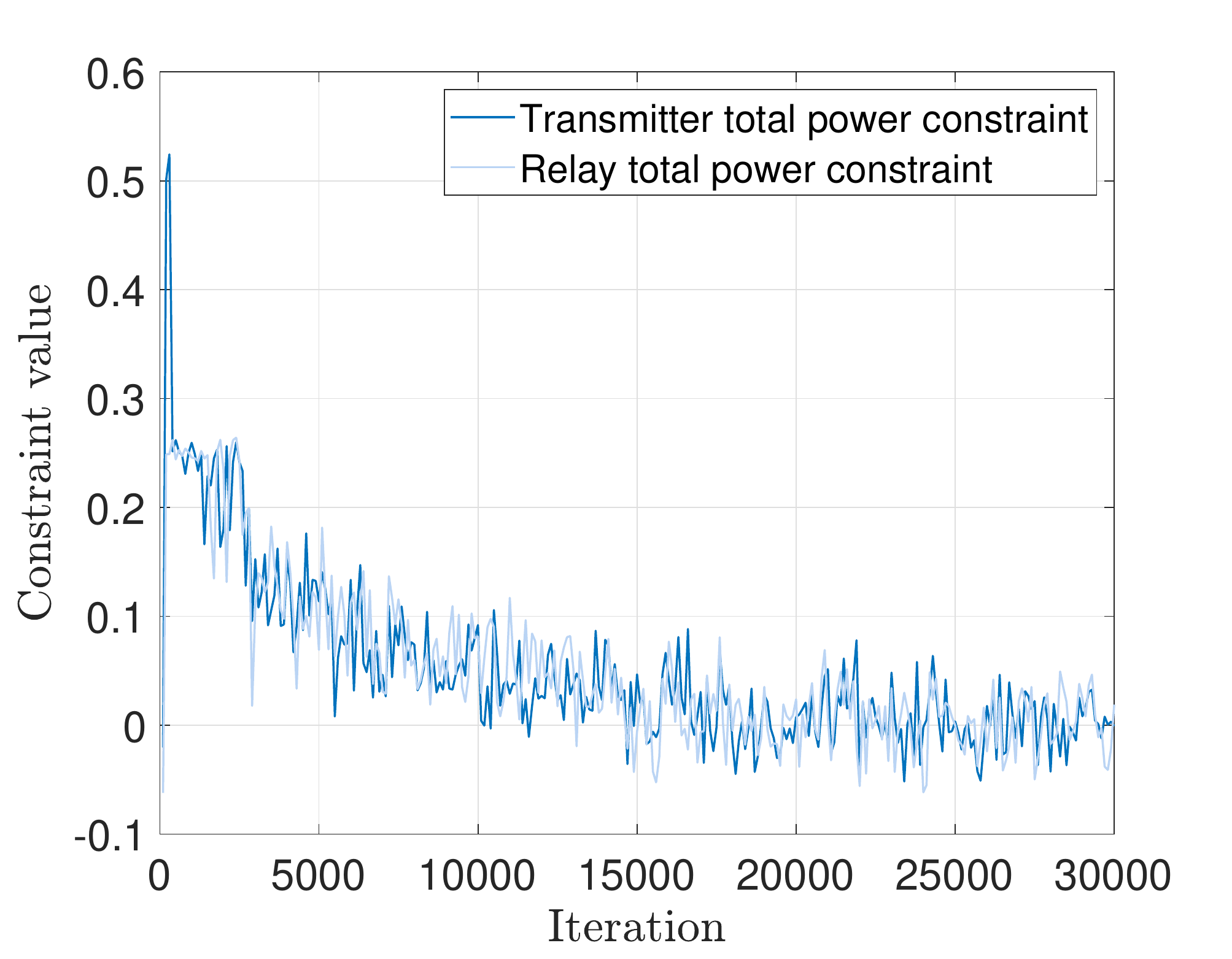}%
\caption{}%
\label{subfigc_pr}%
\end{subfigure}\hfill\hfill%
\begin{subfigure}{0.25\columnwidth}
\includegraphics[width=1.0\linewidth,height = 0.7\linewidth]{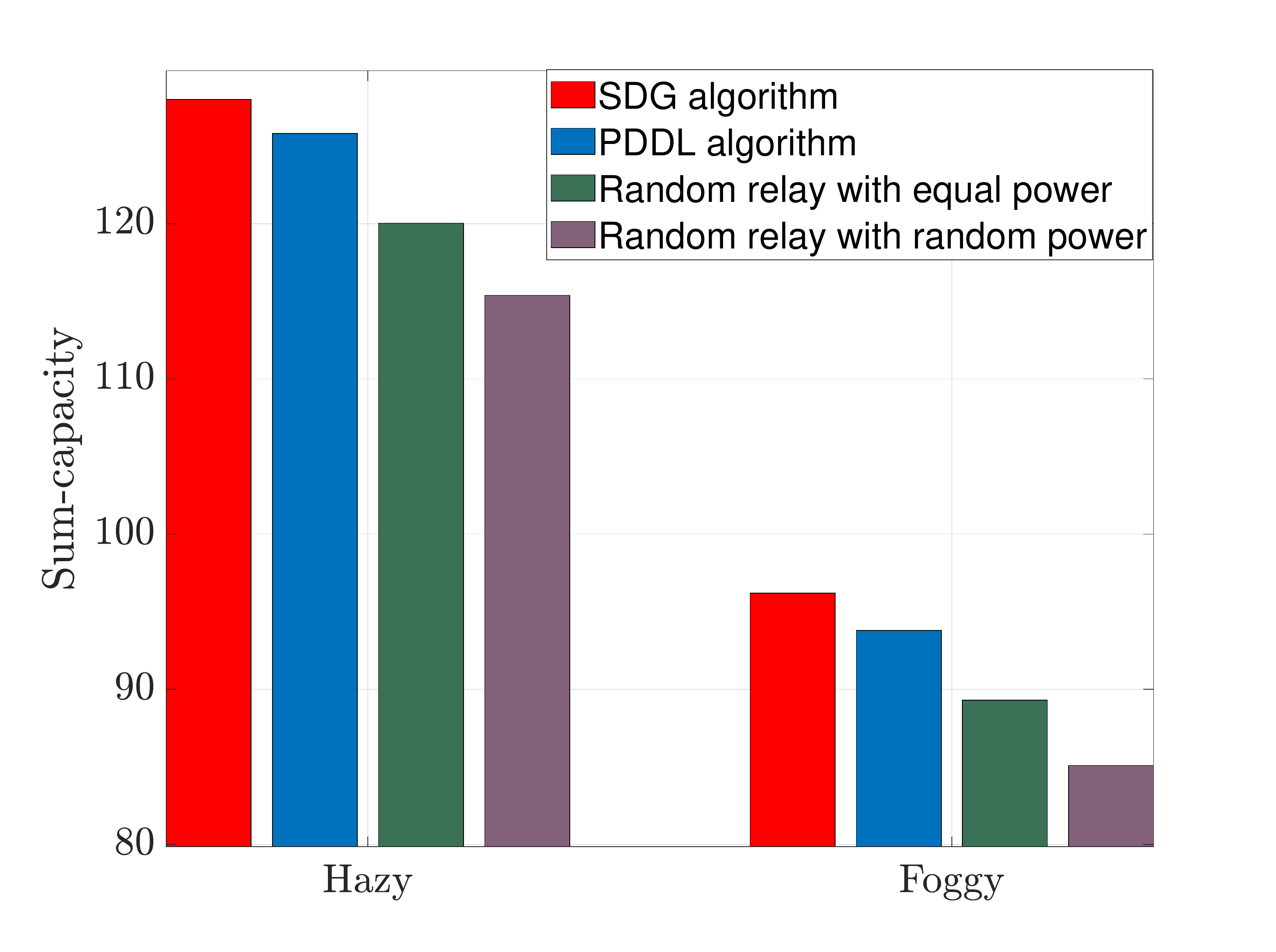}%
\caption{}%
\label{subfigd_pr}%
\end{subfigure}%
\caption{Performance of the SDG, the PDDL and the baseline policies for joint power and relay allocation in the relay-assisted multichannel FSO network. (a) The objective value. (b) The constraint values for the SDG. (c) The constraint values for the PDDL. (d) Hazy and light foggy weather conditions.}\label{fig_pr_1stexample}\vspace{-5mm}
\end{figure*}
%%%%%%%%%%%%%%%%%%%%%%%%%%%%%%%%%%%%%%%%%%%%%%%%%%%%%%%%%%%%%%%%%%%%%%%%%%%%%%%%%%%%%%%%%%%%%%%%%%%%%%%%%%%%%%%%%%%%%%%%%%%%%%%%%%%%%%%%%%%%%%%%%%%%%%%%%%%%%%%%%%%%%%%%%%%%%%%%%%%%%%%%%%%%%%%%%%%%%%%%%%%%%%%%%%%%%%%%%%%%%%%%%%%%%%%%%%%%%%%%%%%%%%%%%%%%%%%%%%%%%%%%%%%%%%%%%%%%%%%%%%%%

We assume a $1$-hop network with $M=5$ parallel relays per hop and $L=5$ orthogonal optical carriers. The priority weights $\bbomega$ are drawn randomly in $[0,1]$ and system parameters are set as: $B=5\times 10^8 Hz$, $T_f = 10^{-8}s$, $\epsilon=1$, $P_t=1.5W$, $P_s = 0.6W$ and $R=0.75 A/W$. We consider two model-free baseline policies: (i) the random relay selection with equal power allocation and (ii) the random relay selection with random power allocation\footnote{The objective is more complicated and the allocated resources include both continuous and binary variables, such that model-based algorithms that solve the KKT conditions (as the water-filling algorithm in Section \ref{exppower}) requires more careful relaxations and much more computation. As we have shown the theoretical (Section \ref{sec_sdg}) and numerical (Section \ref{exppower} and \ref{exp:relay}) optimality of the SDG, we focus on verifying performance of our algorithms in this scenario compared to two low-complexity model-free baseline policies.}. For the PDDL algorithm, we consider the truncated Gaussian distribution for allocated powers and the categorical distribution for selected relays. The DNN is constructed as a two-layered architecture of $200$ and $100$ hidden units with the ReLU nonlinearity. The CSI $\bbh$ are fed as inputs to the DNN, which outputs parameters that specify policy distributions $\pi_{\bbh,\bbtheta}$.

Fig. \ref{fig_pr_1stexample} plots the objective and constraints of the SDG, the PDDL and two baseline policies. The performance of the SDG and the PDDL is superior to that of baseline policies, and the performance improvements get emphasized compared to either single power adaptation in Section \ref{exppower} or single relay selection in Section \ref{exp:relay}. This is because advantages of our algorithms get compounded in this joint problem. The PDDL obtains close performance to the SDG but does not require any system model for implementation. From Fig. \ref{subfigb_pr} and \ref{subfigc_pr}, we see that constraint values converge to zero for both the SDG and the PDDL, confirming the solution feasibility. Fig. \ref{subfigd_pr} shows that both algorithms apply well to changing weather conditions.

\textbf{FSO fronthaul network.} The second experiment considers FSO fronthaul networks---see Section \ref{powerrelay}. We consider the large-scale fronthaul network divided into multiple small-scale fronthaul clusters that perform resource allocation independently. In a fronthaul cluster, the goal is to allocate powers to optical carriers and select the optimal AN at each RRH that maximize the sum-capacity. The optimization problem is formulated by the objective \eqref{eq_fronthaulFSO_capacity} and constraints \eqref{eq_fronthaulFSO_constraints}. Note that the data congestion constraints \eqref{eq_fronthaulFSO_constraintsc} further complicate the problem, making it extremely challenging to solve in practice.

\begin{figure*}%
\centering
\begin{subfigure}{0.33\columnwidth}\centering
\includegraphics[width=1\linewidth, height = 0.7\linewidth]{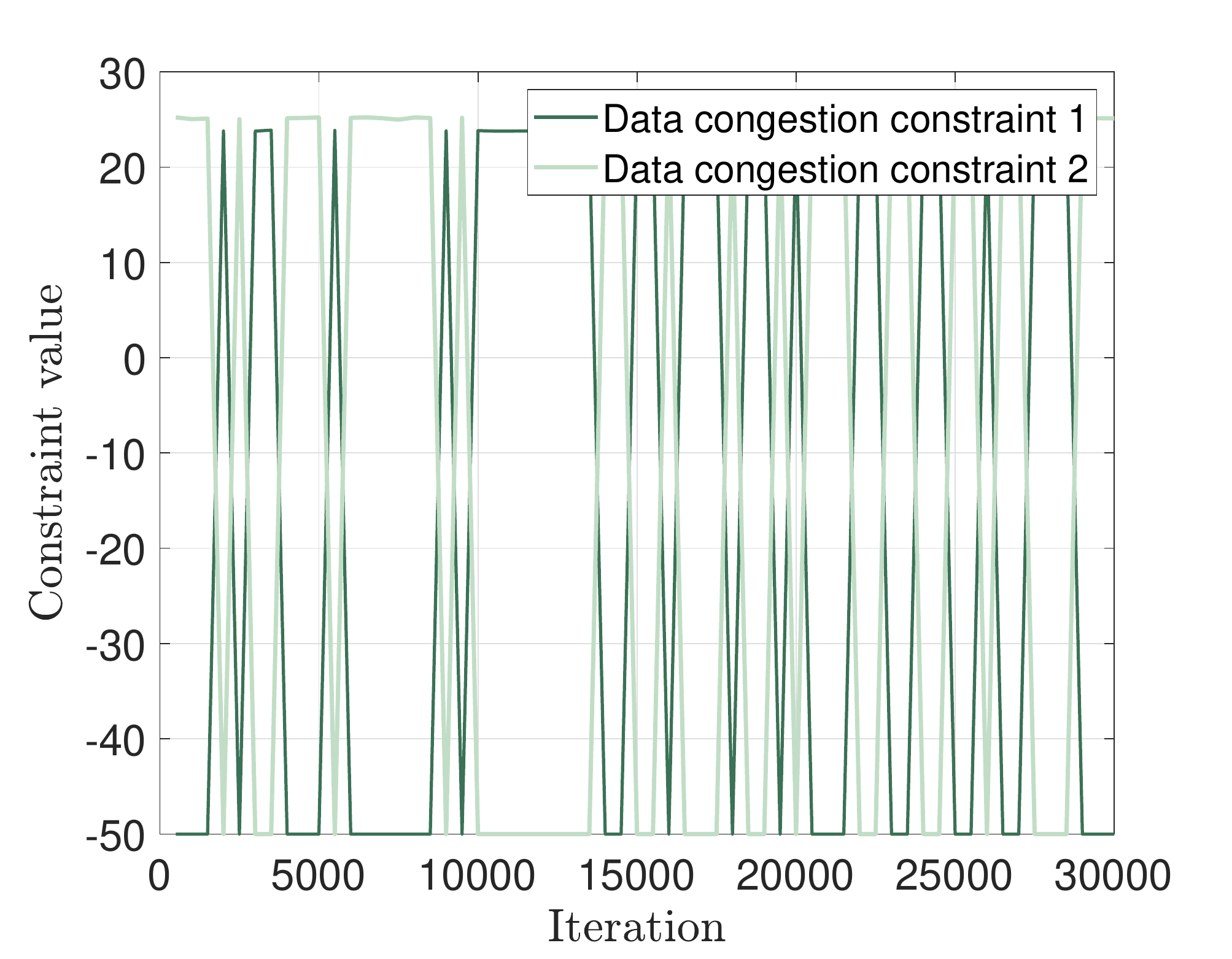}%
\caption{}%
\label{eqconstraint}%
\end{subfigure}\hfill\hfill%
\begin{subfigure}{0.33\columnwidth}\centering
\includegraphics[width=1\linewidth,height = 0.7\linewidth]{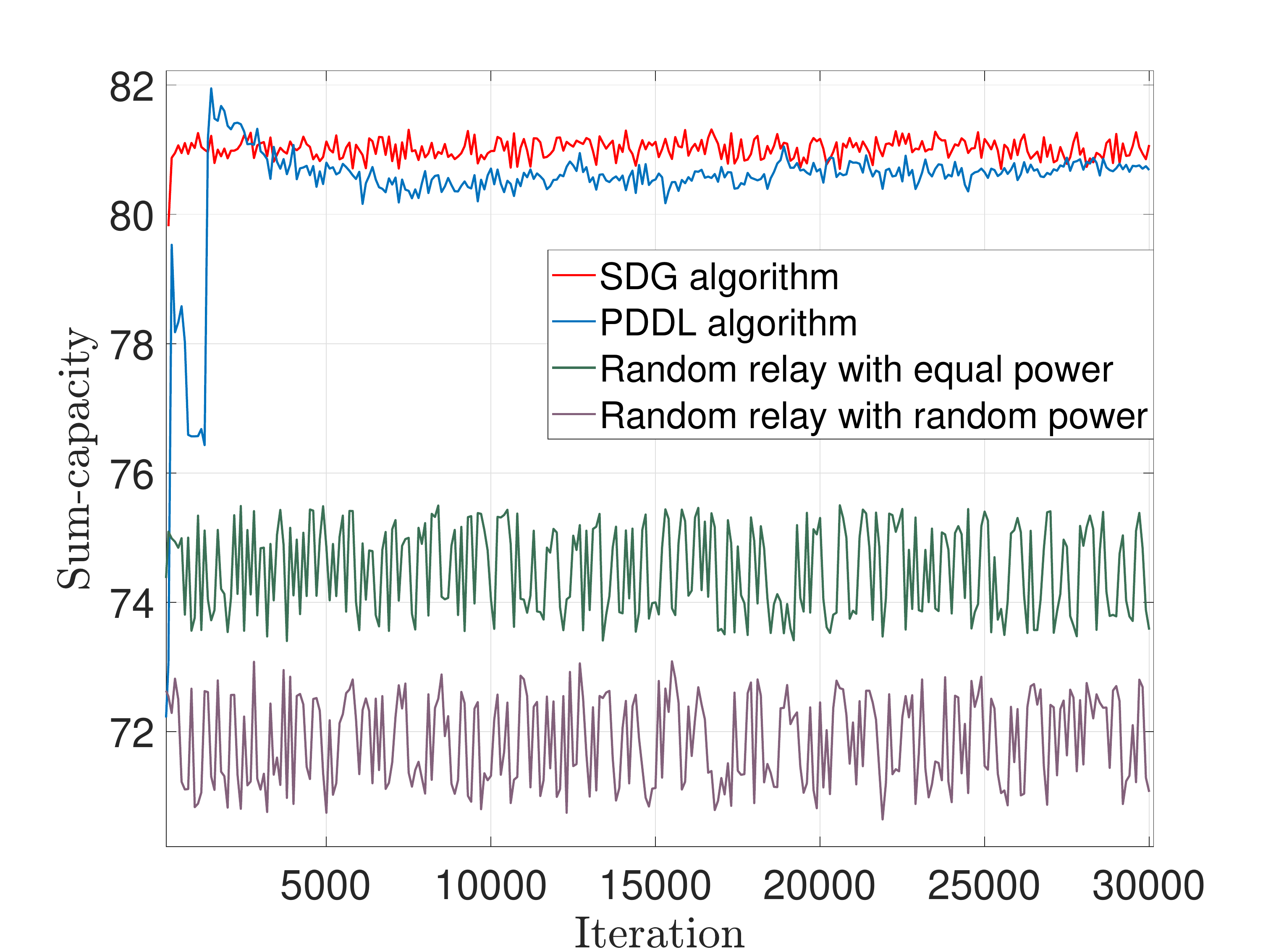}%
\caption{}%
\label{prcom}%
\end{subfigure}\hfill\hfill%
\begin{subfigure}{0.33\columnwidth}\centering
\includegraphics[width=1\linewidth,height = 0.7\linewidth]{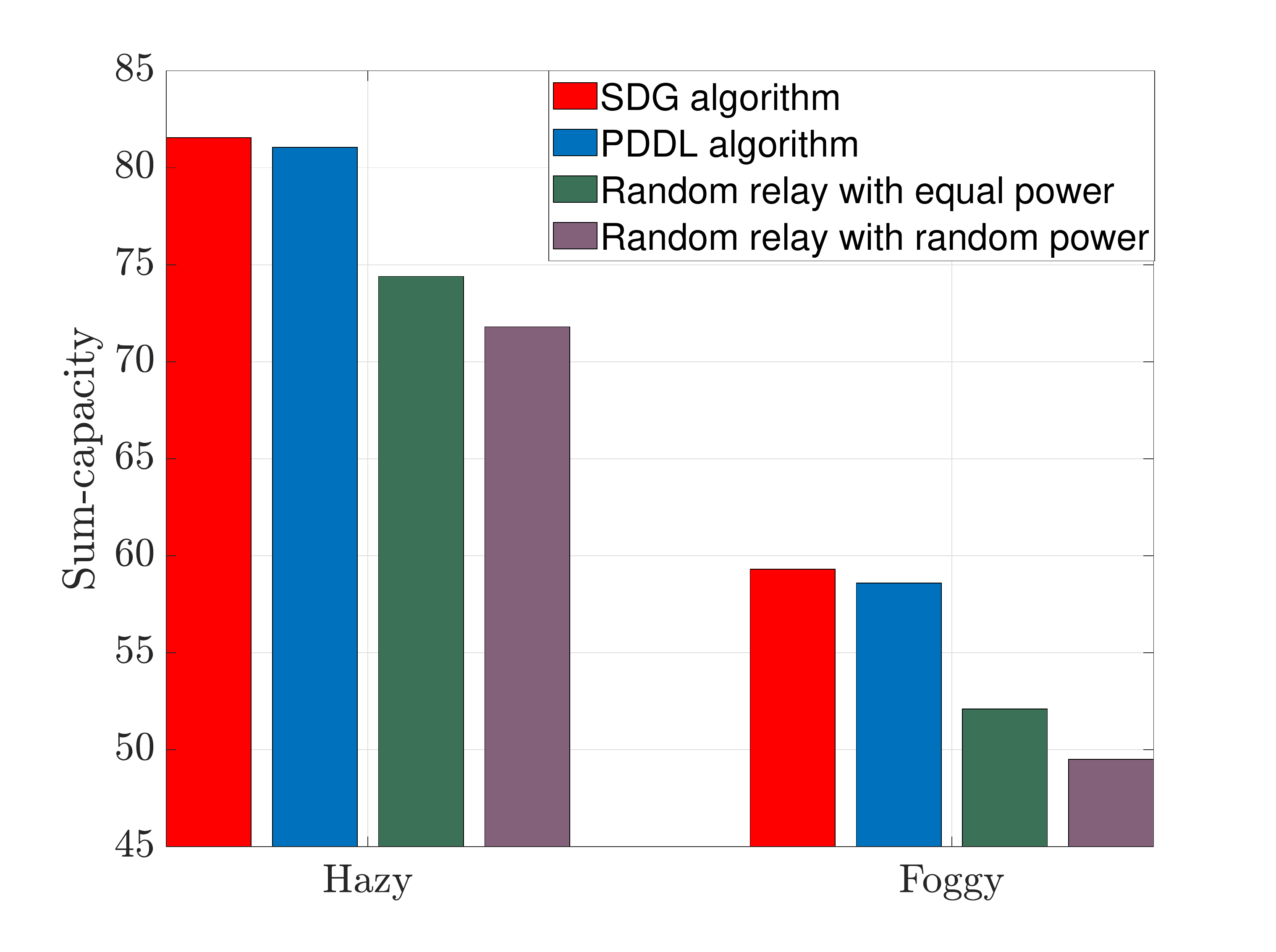}%
\caption{}%
\label{prcomd}%
\end{subfigure}
\caption{(a) The data congestion constraint values of the random AN selection with equal power allocation. (b) The objective values of the SDG, the PDDL and the baseline policies for joint power and relay allocation in the FSO fronthaul network. (c) Hazy and light foggy weather conditions.}\label{fig_pr}\vspace{-5mm}
\end{figure*}

We consider a FSO fronthaul cluster with $N=5$ RRHs, $M=2$ ANs, one BBU and $L=5$ orthogonal carriers. RRHs and ANs are distributed randomly at locations in the squares $[-5km, 5km]^2$ and $[-1km,1km]^2$, respectively. The system parameters are set as: $B=10^9 Hz$, $T_f = 10^{-9}s$, $\epsilon=1$, $P_t=1.5 W$, $P_s=0.6W$, $R=0.75 A/W$ and $C_t = 50$. For the PDDL algorithm, the truncated Gaussian distribution and the categorical distribution are used for allocated powers and selected ANs. As the problem becomes more complicated, we consider a denser DNN with $3$ layers, each of which contains $400$, $200$ and $100$ hidden units, and the ReLU nonlinearity is used. The inputs are channel conditions $\bbh$ over the network and the outputs are parameters specifying policy distributions. We point out that, due to the data congestion constraints \eqref{eq_fronthaulFSO_constraintsc}, there is no feasible heuristic baseline policy for comparison in this problem. To show this more precisely, we test the random relay selection with equal power allocation and plot data congestion constraint values in Fig. \ref{eqconstraint}. We see that constraints are easily broken due to the fact that, if each RRH selects a AN randomly, it is possible that there exist times when all RRHs transmit to the same AN violating the data congestion constraints. We can nonetheless validate the performance of the proposed SDG and PDDL algorithms and demonstrate their feasibility.

\begin{figure*}%
\centering
\begin{subfigure}{0.25\columnwidth}\centering
\includegraphics[width=1.0\linewidth, height = 0.7\linewidth]{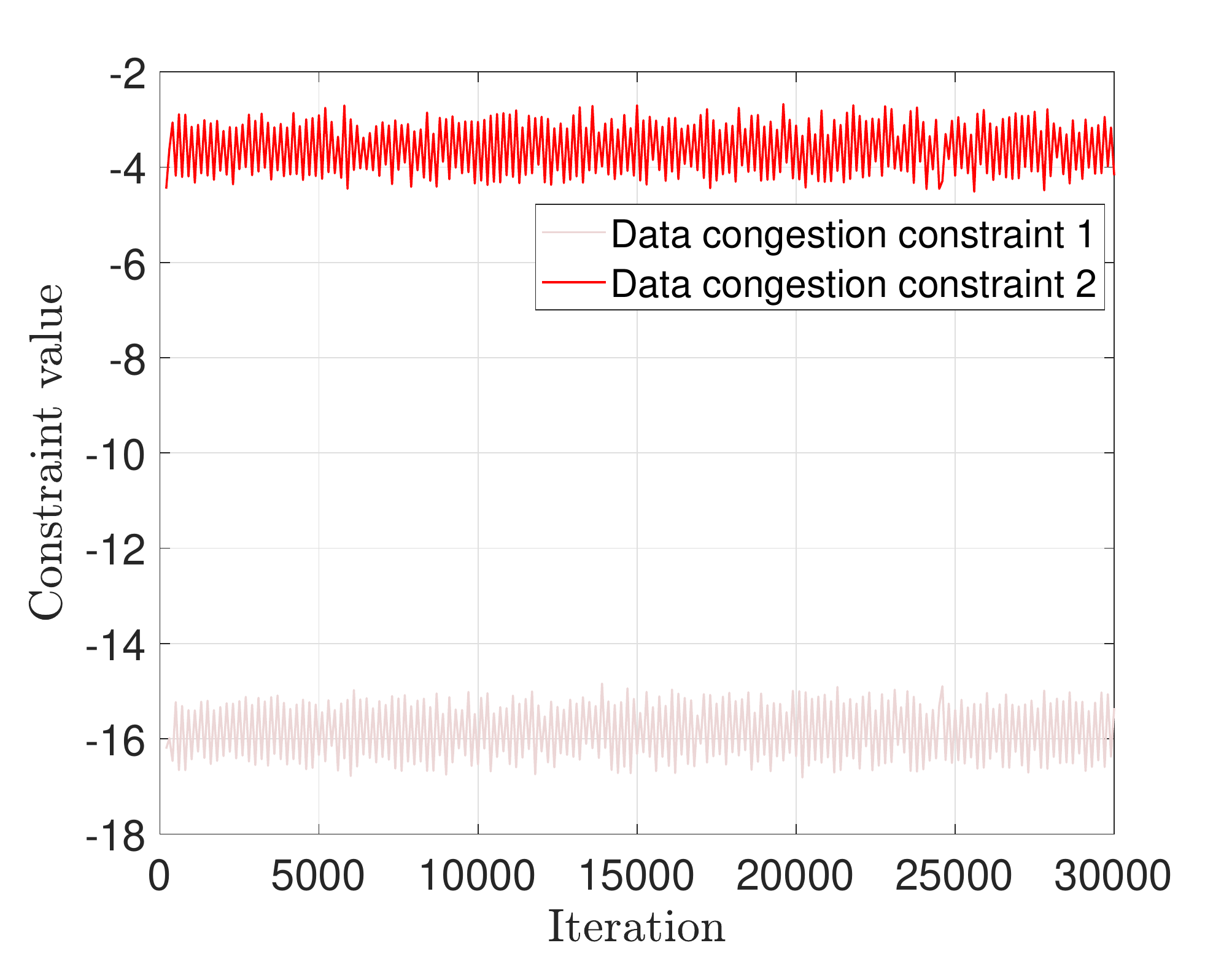}%
\caption{}%
\label{subfig1}%
\end{subfigure}\hfill\hfill%
\begin{subfigure}{0.25\columnwidth}\centering
\includegraphics[width=1.0\linewidth,height = 0.7\linewidth]{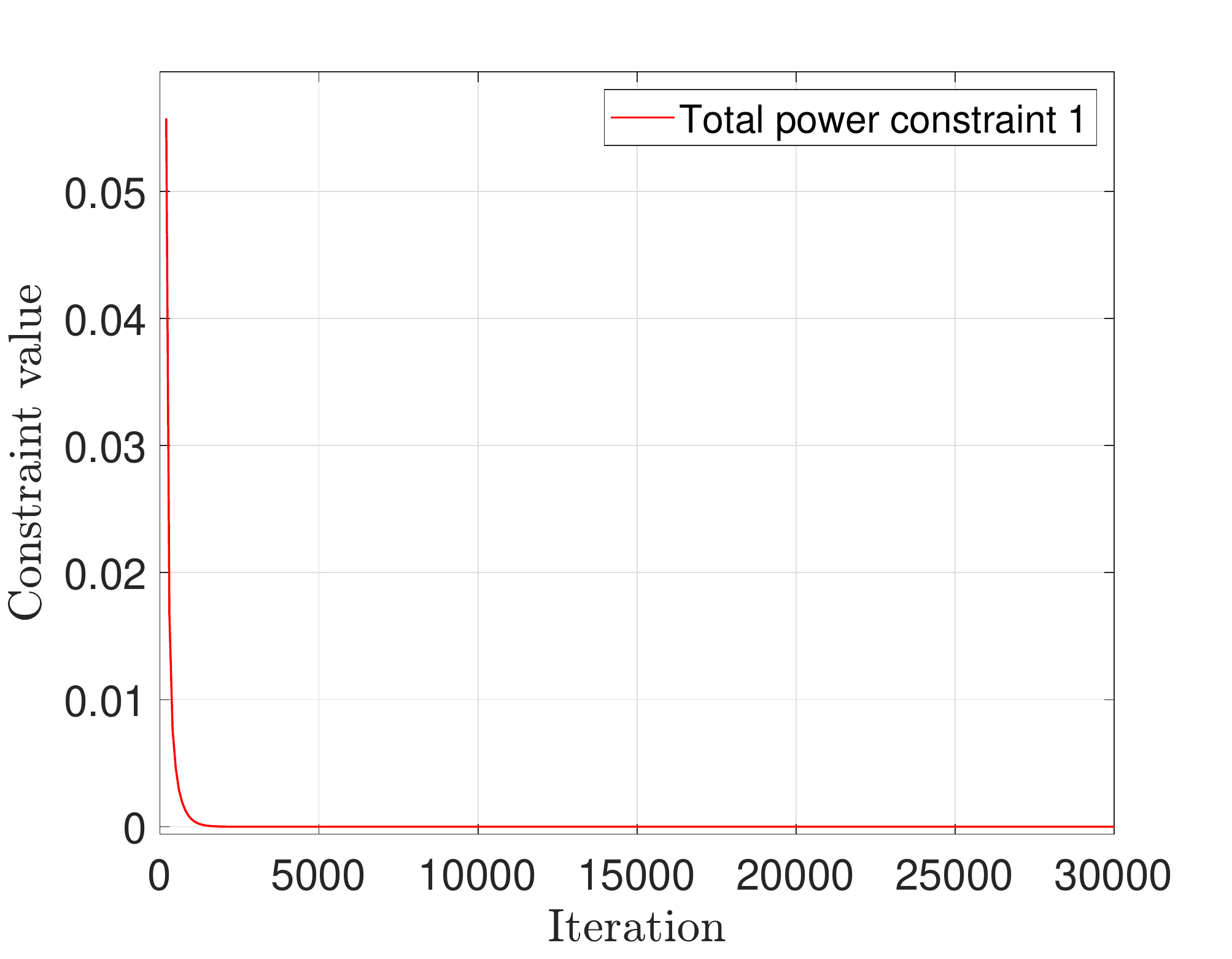}%
\caption{}\label{subfig222}%
\end{subfigure}\hfill\hfill%
\begin{subfigure}{0.25\columnwidth}\centering
\includegraphics[width=1.0\linewidth, height = 0.7\linewidth]{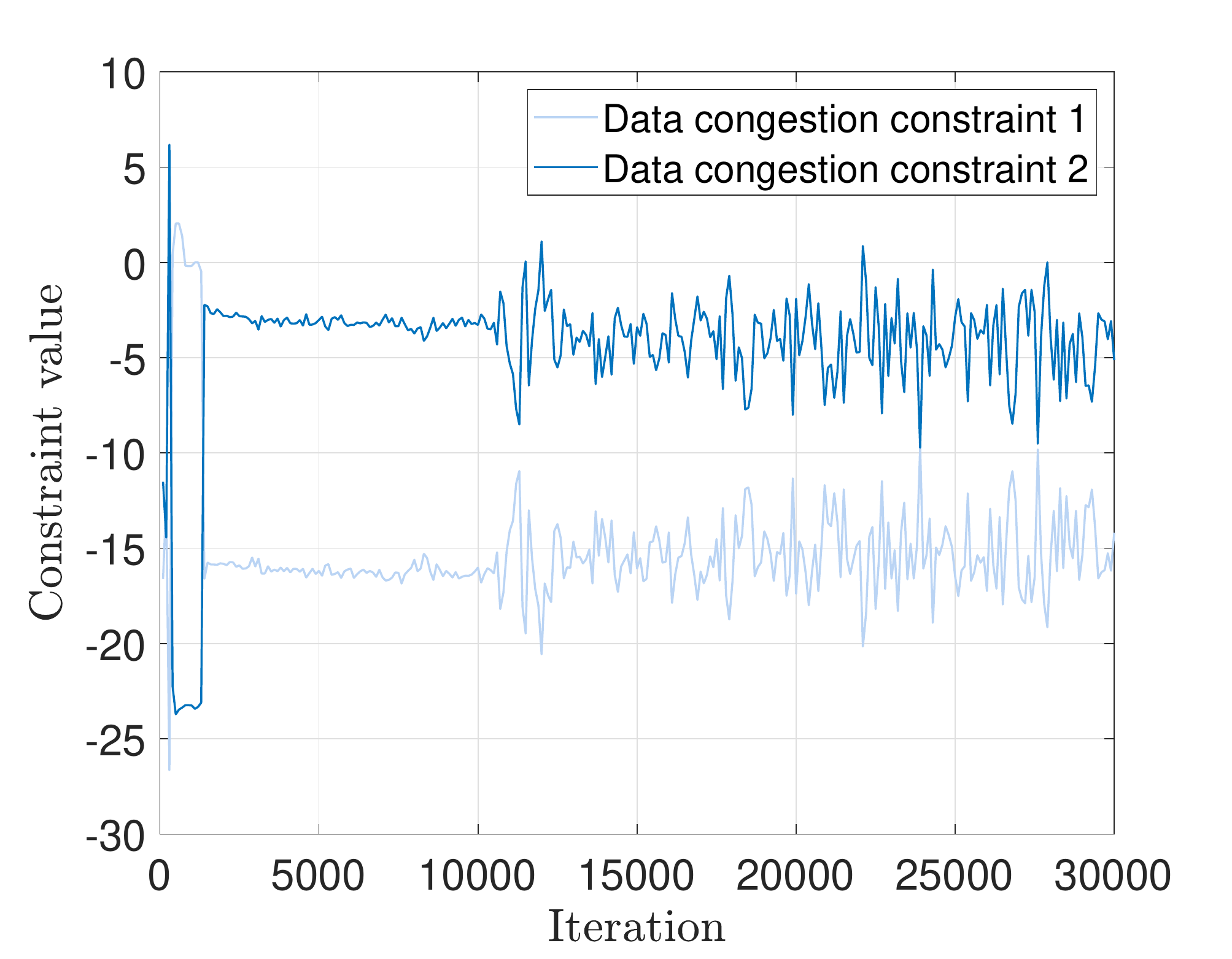}%
\caption{}%
\label{subfig11}%
\end{subfigure}\hfill\hfill%
\begin{subfigure}{0.25\columnwidth}\centering
\includegraphics[width=1.0\linewidth,height = 0.7\linewidth]{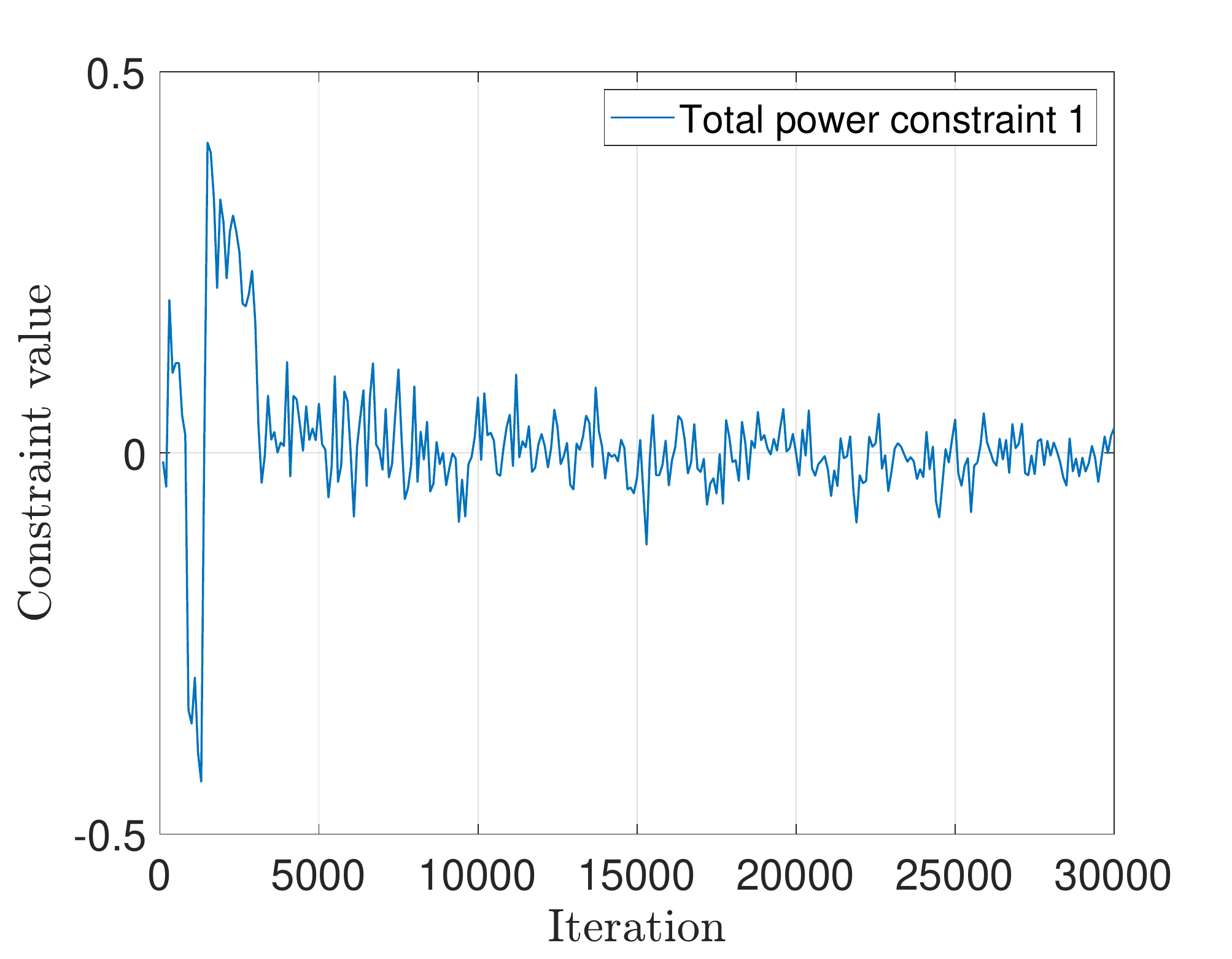}%
\caption{}\label{subfig2222}%
\end{subfigure}
\caption{(a) The data congestion constraints of the SDG (b) The total power constraint of the SDG at an example RRH. (c) The data congestion constraints of the PDDL. (d) The total power constraint of the PDDL at an example RRH.}\label{fig_fronthaul_constraint_values}\vspace{-5mm}
\end{figure*}

%%%%%%%%%%%%%%%%%%%%%%%%%%%%%%%%%%%%%%%%%%%%%%%%%%%%%%%%%%%%%%%%%%%%%%%%%%%%%%%%%%%%%%%%%%%%%%%%%%%%%%%%%%%%%%%%%%%%%%%%%%%%%%%%%%%%%%%%%%%%%%%%%%%%%%%%%%%%%%%%%%%%%%%%%%%%%%%%%%%%%%%%%%%%%%%%%%%%%%%%%%%%%%%%%%%%%%%%%%%%%%%%%%%%%%%%%%%%%%%%%%%%%%%%%%%%%%%%%%%%%%%%%%%%%%%%%%%%%%%%%%%%
\begin{table}[t] 
\begin{center}  
\caption{Implementation time required for the SDG and the PDDL for joint power and relay allocation. (a) Relay-assisted multichannel FSO network. (b) FSO fronthaul network.}  
\label{table3}
\begin{tabular}{|l|l|l| p{2cm}|}  
\hline  
 & Case (a)  & Case (b) \\ \hline  
The SDG &  $8.28\cdot 10^{-3}$s & $2.83\cdot 10^{-2}$s \\ \hline  
The PDDL &  $3.28\cdot 10^{-5}$s & $9.34 \cdot 10^{-5}$s\\
\hline  
\end{tabular}  
\end{center}  \vspace{-4mm}
\end{table} 

We plot in Fig. \ref{prcom} the performance achieved by the SDG and the PDDL algorithms. Since both baseline policies are not feasible, we consider them as benchmark values only for reference. As seen in the prior simulations, the SDG exhibits the best performance using model knowledge, while the PDDL performs comparably to the SDG while forgoing any models. Fig. \ref{fig_fronthaul_constraint_values} plots constraint values for both algorithms to confirm feasibility of the learned solutions, where Fig. \ref{subfig222} and \ref{subfig2222} show total power limitations, while Fig. \ref{subfig1} and \ref{subfig11} illustrate data congestion constraints.

To conclude our numerical analysis, we provide in Table \ref{table3} the implementation time of the SDG and the PDDL algorithms for two joint power and relay allocation problems. We see that besides requiring system model information, the SDG achieves better performance at the expense of more implementation time. The latter gets emphasized when the FSO system or the resource allocation problem becomes more complicated. The implementation time of the PDDL is much lower but increases slightly from single power adaptation to the joint power and relay allocation, which is because the applied DNN gets deeper and denser. However, its computation time is independent on the FSO system and the optimization problem, resulting in an efficient implementation. We further note that a denser DNN learns better performance while taking more time for implementation, indicating a tradeoff between these two factors.

%!TEX root = mainOp.tex
%%%%%%%%%%%%%%%%%%%%%%%%%%%%%%%
%%% SECTION : Conclusions   %%%
%%%%%%%%%%%%%%%%%%%%%%%%%%%%%%%

\section{Conclusions} \label{sec:conclusion}

In this paper, we consider the general resource allocation in free space optical communications. We formulate the problem under the constrained stochastic optimization framework. Such problems are typically challenging due to the non-convex nature, multiple constraints and lack of model information. We first proposed the model-based Stochastic Dual Gradient algorithm, which solves the problem exactly by exploiting the strong duality. However, it heavily relies on system models that may not be available in practice. The model-free Primal-Dual Deep Learning algorithm was developed to overcome this issue. It parameterizes the resource allocation policy with DNNs and learns optimal parameters by updating primal and dual variables simultaneously. Policy gradient method is applied to the primal update in order to estimate necessary gradient information without using the knowledge of system and channel models. The proposed algorithms are computationally efficient and transferable to any resource allocation problem under the framework, which were validated in numerous numerical experiments.

\end{spacing}

%%%%%%%%%%%%%%%%%%
%%% APPENDIX   %%%
%%%%%%%%%%%%%%%%%%

%\appendices %\label{sec:appendix}

%\input{proofsStabilitySO3.tex}

\begin{spacing}{1.225}

\bibliographystyle{IEEEtran}
\bibliography{myIEEEabrv,biblioOp}

%%%%%%%%%%%%%%%%%%%%%%%%
%%% EXTRA APPENDIX   %%%
%%%%%%%%%%%%%%%%%%%%%%%%
\newpage

%\label{sec:appendix}

%\input{proofsStabilitySO31.tex}

\end{spacing}
\end{document}